\newtheorem{theorem}{Theorem}
\newtheorem{lemma}[theorem]{Lemma}
\newcommand*\Let[2]{\State #1 $\gets$ #2}
\newcommand{\splitmem}{splitMEM}
\newcommand{\algoA}{\textsf{A1}}
\newcommand{\algoB}{\textsf{A2}}
\newcommand{\algoC}{\textsf{A3}}
\newcommand{\algoCone}{\textsf{A3compr1}}
\newcommand{\algoCtwo}{\textsf{A3compr2}}
\newcommand{\algoD}{\textsf{A4}}
\newcommand{\algoDone}{\textsf{A4compr1}}
\newcommand{\algoDtwo}{\textsf{A4compr2}}
\newcommand{\algoDplus}{\textsf{A4+explicit}}
\newcommand{\algoDoneplus}{\textsf{A4compr1+explicit}}
\newcommand{\algoDtwoplus}{\textsf{A4compr2+explicit}}
\newcommand{\hd}{\hphantom{4}}
\newcommand{\seqnumber}{d}
\newcommand{\suffixlb}{\mathit{suffix}\_lb}
\newcommand{\BVl}{B_l}
\newcommand{\BVr}{B_r}
\newcommand{\LeftBoundary}{lb}
\newcommand{\RightBoundary}{rb}
\newcommand{\SNT}{\mathtt{\symbol{36}}}
\newcommand{\lcp}[0]{\mathsf{lcp}}
\newcommand{\LCP}[0]{\mathsf{LCP}}
\newcommand{\SUF}[0]{\mathsf{SA}}
\newcommand{\SA}[0]{\mathsf{SA}}
\newcommand{\LF}[0]{\mathsf{LF}}
\newcommand{\BWT}{\mathsf{BWT}}
\newcommand{\SF}[1]{S_{\SUF[#1]}}
\newcommand{\Keyw}[1]{{\textbf{#1}}}
\begin{document}

\title{A representation of a compressed de Bruijn graph for pan-genome analysis that enables search}
\author{Timo Beller}
\author{Enno Ohlebusch}
\affil{Institute of Theoretical Computer Science, Ulm University, D-89069 Ulm, Germany \texttt{\{Timo.Beller,Enno.Ohlebusch\}@uni-ulm.de}}
\date{}
\maketitle

\begin{abstract}
Recently, Marcus et al.\ (Bioinformatics 2014) proposed to use a compressed de
Bruijn graph to describe the relationship between
the genomes of many individuals/strains of the same or closely related species.
They devised an $O(n\log g)$ time algorithm called splitMEM that
constructs this graph directly (i.e., without using the
uncompressed de Bruijn graph) based on a suffix tree, where $n$ is the total
length of the genomes and $g$ is the length of the longest genome.
In this paper, we present a construction algorithm
that outperforms their algorithm in theory and in practice.
Moreover, we propose a new space-efficient representation of the
compressed de Bruijn graph that adds the possibility to search
for a pattern (e.g.\ an allele---a variant form of a gene) within
the pan-genome.
\end{abstract}

\section{Introduction}
\label{sec:Introduction}
Today, next generation sequencers produce vast amounts of DNA sequence
information and it is often the case that multiple genomes of the same or
closely related species are available. An example is the 1000 Genomes Project,
which started in 2008. Its goal was to
sequence the genomes of at least 1000 humans from all over the world and to
produce a catalog of all variations (SNPs, indels, etc.) in the human
population. The genomic sequences together with this catalog is called the
``pan-genome'' of the population. There are several approaches that try to
capture variations between many individuals/strains in a population graph; see
e.g.\ \cite{SCH:HAG:OSS:2009,HUA:POP:BAT:2013,RAH:WEE:REI:2014,DIL:ETAL:2015}.
These works all require a multi-alignment as input. By contrast,
Marcus et al.\ \cite{MAR:LEE:SCH:2014} use a compressed de Bruijn graph
of maximal exact matches (MEMs) as a graphical representation of the
relationship between genomes;
see Section \ref{sec-Compressed de Bruijn graph} for
a definition of de Bruijn graphs. They describe an $O(n\log g)$ time
algorithm that directly computes the compressed de Bruijn graph
on a suffix tree, where $n$ is the total length of the genomes
and $g$ is the length of the longest genome.
Marcus et al.\ write in \cite[Section 4]{MAR:LEE:SCH:2014}:
``Future work remains to improve splitMEM and further unify the family of sequence indices. Although ..., most desired are techniques to reduce the space consumption ...''
In this article, we present such a technique. To be more precise,
we will develop an $O(n\log \sigma)$ time algorithm that
constructs the compressed de Bruijn graph directly on an FM-index
of the genomes, where $\sigma$ is the size of the underlying alphabet.
This algorithm is faster than the algorithms described in a preliminary
version of this article \cite{BEL:OHL:2015}.
Moreover, we propose a new space-efficient representation of the
compressed de Bruijn graph that adds the possibility to search
for a pattern (e.g.\ an allele---a variant form of a gene) within
the pan-genome. More precisely, one can use the FM-index to search
for the pattern and, if the pattern occurs in the
pan-genome, one can start the exploration of the compressed de Bruijn graph
at the nodes that correspond to the pattern.

The \emph{contracted}
de Bruijn graph introduced by Cazaux et al.\ \cite{CAZ:LEC:RIV:2013}
is closely related but not identical to the compressed de Bruijn graph.
A node in the contracted de Bruijn graph is not necessarily a substring
of one of the genomic sequences (see the remark following Definition 3 in
\cite{CAZ:LEC:RIV:2013}). Thus the contracted de Bruijn graph, which can
be constructed in linear time from the suffix tree \cite{CAZ:LEC:RIV:2013},
is not useful for our purposes.

\section{Preliminaries}
\label{sec-Preliminaries}

Let $\Sigma$ be an ordered alphabet of size $\sigma$ whose smallest element is
the sentinel character $\SNT$. In the following,
$S$ is a string of length $n$ on $\Sigma$ having the sentinel character
at the end (and nowhere else).
In pan-genome analysis, $S$ is the concatenation of multiple
genomic sequences, where the different sequences are separated by special
symbols (in practice, we use one separator symbol and treat the
different occurrences of it as if they were different characters;
see Section \ref{sec-Computation of right-maximal k-mers}).
For $1 \leq i \leq n$, $S[i]$ denotes the \emph{character at position} $i$ in
$S$. For $i \leq j$, $S[i..j]$ denotes the \emph{substring} of $S$ starting
with the character at position $i$ and ending with the character at position
$j$. Furthermore, $S_i$ denotes the $i$-th suffix $S[i..n]$ of $S$.
The \emph{suffix array} $\SUF$ of the string $S$
is an array of integers in the range $1$ to $n$ specifying the
lexicographic ordering of the $n$ suffixes of $S$,
that is, it satisfies $\SF{1} < \SF{2} < \cdots < \SF{n}$;
see Fig.\ \ref{fig:suffix array} for an example.
A suffix array can be constructed in linear time; see e.g.\
the overview article \cite{PUG:SMY:TUR:2007}.
For every substring $\omega$ of $S$, the $\omega$-interval
is the suffix array interval $[i..j]$ so that $\omega$ is a
prefix of $S_{\SUF[k]}$ if and only if $i\leq k \leq j$.

The Burrows-Wheeler transform \cite{BUR:WHE:1994}
converts $S$ into the string
$\BWT[1..n]$ defined by $\BWT[i]=S[\SUF[i] -1]$ for all
$i$ with $\SUF[i] \neq 1$ and $\BWT[i] = \SNT$ otherwise;
see Fig.\ \ref{fig:suffix array}.
Several semi-external and external memory algorithms are known
that construct the $\BWT$ directly (i.e., without constructing
the suffix array); see e.g.\ 
\cite{KAR:2007,OKA:SAD:2009,FER:GAG:MAN:2010,BEL:ZWE:GOG:OHL:2013}.

\begin{figure}[ht]
\begin{center}
\begin{scriptsize}
\begin{tabular}{|r|r|r|c|c|r|r|c|l|}
\hline
$i$ & $\SA$       & $\LCP$      & $\BVr$ & $\BVl$ & $\LF$ & $\Psi$ & $\BWT$      & $S_{\SA[i]}$              \\ \hline
\texttt{ 1} & \texttt{15} & \texttt{-1} & \texttt{ 0 } & \texttt{ 0 } & \texttt{10}& \texttt{5 } & \texttt{G}  & \texttt{\$}               \\ \hline
\texttt{ 2} & \texttt{12} & \texttt{ 0} & \texttt{ 1 } & \texttt{ 0 } & \texttt{13}& \texttt{6 } & \texttt{T}  & \texttt{ACG\$}            \\ \hline
\texttt{ 3} & \texttt{ 8} & \texttt{ 3} & \texttt{ 0 } & \texttt{ 0 } & \texttt{14}& \texttt{7 } & \texttt{T}  & \texttt{ACGTACG\$}        \\ \hline
\texttt{ 4} & \texttt{ 4} & \texttt{ 7} & \texttt{ 1 } & \texttt{ 0 } & \texttt{15}& \texttt{8 } & \texttt{T}  & \texttt{ACGTACGTACG\$}    \\ \hline
\texttt{ 5} & \texttt{ 1} & \texttt{ 2} & \texttt{ 0 } & \texttt{ 0 } & \texttt{1}& \texttt{9 } & \texttt{\$} & \texttt{ACTACGTACGTACG\$} \\ \hline
\texttt{ 6} & \texttt{13} & \texttt{ 0} & \texttt{ 0 } & \texttt{ 0 } & \texttt{2}& \texttt{10 } & \texttt{A}  & \texttt{CG\$}             \\ \hline
\texttt{ 7} & \texttt{ 9} & \texttt{ 2} & \texttt{ 0 } & \texttt{ 0 } & \texttt{3}& \texttt{11 } & \texttt{A}  & \texttt{CGTACG\$}         \\ \hline
\texttt{ 8} & \texttt{ 5} & \texttt{ 6} & \texttt{ 0 } & \texttt{ 0 } & \texttt{4}& \texttt{12 } & \texttt{A}  & \texttt{CGTACGTACG\$}     \\ \hline
\texttt{ 9} & \texttt{ 2} & \texttt{ 1} & \texttt{ 0 } & \texttt{ 1 } & \texttt{5}& \texttt{15 } & \texttt{A}  & \texttt{CTACGTACGTACG\$}  \\ \hline
\texttt{10} & \texttt{14} & \texttt{ 0} & \texttt{ 0 } & \texttt{ 0 } & \texttt{6}& \texttt{1 } & \texttt{C}  & \texttt{G\$}              \\ \hline
\texttt{11} & \texttt{10} & \texttt{ 1} & \texttt{ 0 } & \texttt{ 0 } & \texttt{7}& \texttt{13 } & \texttt{C}  & \texttt{GTACG\$}          \\ \hline
\texttt{12} & \texttt{ 6} & \texttt{ 5} & \texttt{ 0 } & \texttt{ 1 } & \texttt{8}& \texttt{14 } & \texttt{C}  & \texttt{GTACGTACG\$}      \\ \hline
\texttt{13} & \texttt{11} & \texttt{ 0} & \texttt{ 0 } & \texttt{ 0 } & \texttt{11}& \texttt{2 } & \texttt{G}  & \texttt{TACG\$}           \\ \hline
\texttt{14} & \texttt{ 7} & \texttt{ 4} & \texttt{ 0 } & \texttt{ 0 } & \texttt{12}& \texttt{3 } & \texttt{G}  & \texttt{TACGTACG\$}       \\ \hline
\texttt{15} & \texttt{ 3} & \texttt{ 8} & \texttt{ 0 } & \texttt{ 0 } & \texttt{9}& \texttt{4 } & \texttt{C}  & \texttt{TACGTACGTACG\$}   \\ \hline
\texttt{16} & \texttt{  } & \texttt{-1} & \texttt{   } & \texttt{   } & \texttt{ } & \texttt{ } & \texttt{ } & \texttt{ } \\ \hline
\end{tabular}
\end{scriptsize}
\end{center}
\caption{The suffix array $\SUF $ of the string \texttt{ACTACGTACGTACG\$} and
related notions are defined in Section \ref{sec-Preliminaries}.
The bit vectors $\BVr$ and $\BVl$ are explained in Section \ref{sec-Computation of right-maximal k-mers}.}
\label{fig:suffix array}
\end{figure}

The \emph{wavelet tree} \cite{GRO:GUP:VIT:2003} of the $\BWT$ supports
one backward search step in $O(\log \sigma)$ time \cite{FER:MAN:2000}:
Given the $\omega$-interval $[lb..rb]$ and a character $c\in \Sigma$,
$backwardSearch(c,[lb..rb])$ returns the $c\omega$-interval $[i..j]$
(i.e., $i\leq j$ if $c\omega$ is a substring of $S$; otherwise $i > j$).
This crucially depends on the fact
that a bit vector $B$ can be preprocessed in linear time
so that an arbitrary $rank_1(B,i)$ query (asks for the number of ones in
$B$ up to and including position $i$) can be answered in constant time
\cite{JAC:1989}. Backward search can be generalized on the wavelet tree as
follows: Given an $\omega$-interval $[lb..rb]$, a slight modification of
the procedure $getIntervals([lb..rb])$ described in \cite{BEL:GOG:OHL:SCH:2013}
returns the list $[(c,[i..j]) \mid c\omega \mbox{ is a substring of } S
\mbox{ and } [i..j] \mbox{ is the } c\omega\mbox{-interval}]$,
where the first component of an element $(c,[i..j])$ must be a character.
The worst-case time complexity of the procedure $getIntervals$ is
$O(z + z \log (\sigma/z))$, where $z$ is the number of elements in
the output list; see \cite[Lemma 3]{GAG:NAV:PUG:2012}.

The $\LF$-mapping (last-to-first-mapping)
is defined as follows: If $\SUF[i] = q$, then
$\LF(i)$ is the index $j$ so that $\SUF[j] = q-1$ (if $\SUF[i] = 1$,
then $\LF(i)=1$). In other words, if the $i$-th entry in the suffix array
is the suffix $S_q$, then $\LF(i)$ ``points'' to the entry at which
the suffix $S_{q-1}$ can be found; see Fig.~\ref{fig:suffix array}.
The function $\Psi$ is the inverse of the $\LF$-mapping.
Using the wavelet tree of the $\BWT$, a value $\LF(i)$ or $\Psi(i)$
can be calculated in $O(\log \sigma)$ time.
For later purposes, we recall how the $\LF$-mapping can be computed
from the $\BWT$. First, the $C$-array is calculated, where
for each $c \in \Sigma$, $C[c]$ is the overall number of occurrences of
characters in $\BWT$ that are strictly smaller than $c$.
Second, if in a left-to-right scan of the $\BWT$,
where the loop-variable $i$ varies from $1$ to $n$,
$C[c]$ is incremented by one for $c = \BWT[i]$, then $\LF[i] = C[c]$.

The suffix array $\SUF$ is often enhanced with the so-called
$\LCP$-array containing the lengths of longest common prefixes between
consecutive suffixes in $\SUF$; see Fig.\ \ref{fig:suffix array}.
Formally, the  $\LCP$-array is an array so that $\LCP[1] = -1 = \LCP[n+1]$ and
$\LCP[i] = |\lcp(\SF{i-1},\SF{i})|$ for $2\leq i \leq n$,
where $\lcp(u,v)$ denotes the longest common prefix
between two strings $u$ and $v$. The $\LCP$-array can be computed in linear
time from the suffix array and its inverse, but it is also possible to construct it directly from the wavelet tree of
the $\BWT$ in $O(n \log \sigma)$ time with the
help of the procedure $getIntervals$ \cite{BEL:GOG:OHL:SCH:2013}.

A substring $\omega$ of $S$ is a \emph{repeat} if it occurs at least twice
in $S$. Let $\omega$ be a repeat of length $\ell$ and let $[i..j]$ be the
$\omega$-interval. The repeat $\omega$ is \emph{left-maximal} if
$|\{\BWT[x] \mid i \leq x \leq j\}| \geq 2$, i.e., the set
$\{S[\SUF[x]-1] \mid i \leq x \leq j\}$ of all characters that precede
at least one of the suffixes $\SF{i},\dots,\SF{j}$ is not singleton
(where $S[0] := \$ $).
Analogously, the repeat $\omega$ is \emph{right-maximal}
if $|\{S[\SUF[x]+\ell] \mid i \leq x \leq j\}| \geq 2$.
A left- and right-maximal repeat is called \emph{maximal} repeat.
(Note that \cite{MAR:LEE:SCH:2014} use the term ``maximal exact match''
instead of the more common term ``maximal repeat''. We will not use
the term ``maximal exact match'' here.)
A detailed explanation of the techniques used here can be found in
\cite{OHL:2013}.

\section{Compressed de Bruijn graph}
\label{sec-Compressed de Bruijn graph}

Given a string $S$ of length $n$ and a natural number $k$,
the de Bruijn graph of $S$
contains a node for each distinct length $k$ substring of $S$,
called a $k$-mer. Two nodes $u$ and $v$ are connected by a directed edge
$(u,v)$ if $u$ and $v$ occur consecutively in $S$,
i.e., $u = S[i..i+k-1]$ and $v = S[i+1..i+k]$.
Fig.\ \ref{fig:de Bruijn graph} shows an example. Clearly,
the graph contains at most $n$ nodes and $n$ edges.
By construction, adjacent nodes will overlap by $k-1$ characters,
and the graph can include multiple edges connecting the same pair
of nodes or self-loops representing overlapping repeats.
For every node, except for the start node (containing the first $k$ characters
of $S$) and the stop node (containing the last $k$ characters of $S$),
the in-degree coincides with the out-degree.
A de Bruijn graph can be ``compressed'' by merging non-branching chains of
nodes into a single node with a longer string. More precisely,
if node $u$ is the only predecessor of node $v$ and $v$ is the only
successor of $u$ (but there may be multiple edges $(u,v)$), then
$u$ and $v$ can be merged into a single node that has the predecessors of
$u$ and the successors of $v$. After maximally compressing the graph,
every node (apart from possibly the start node) has at least two different
predecessors or its single predecessor has at least two different successors
and every node (apart from the stop node) has at least two different
successors or its single successor has at least two different predecessors;
see Fig.\ \ref{fig:de Bruijn graph}.
Of course, the compressed de Bruijn graph can be
built from its uncompressed counterpart (a much larger graph), but this is
disadvantageous because of the huge space consumption. That is why we will
build it directly.

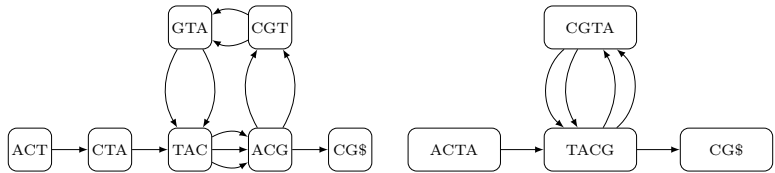
\begin{figure}[H]
\begin{center}
\scalebox{0.81}{
\begin{tikzpicture}[
x=1.30cm, y=2.00cm,
every path/.style={-latex},
every node/.style={draw,inner sep=0pt,minimum size=20pt,font=\scriptsize},
uncompressed/.style={rounded corners, },
compressed/.style={rounded corners, text width=1.5cm, align=center},
]
\node[uncompressed] (1) at (0,0) {ACT} ;
\node[uncompressed] (2) at (1,0) {CTA} ;
\node[uncompressed] (3) at (2,0) {TAC} ;
\node[uncompressed] (4) at (3,0) {ACG} ;
\node[uncompressed] (5) at (3,1) {CGT} ;
\node[uncompressed] (6) at (2,1) {GTA} ;
\node[uncompressed] (7) at (4,0) {CG\$};
\draw (1) to               (2);
\draw (2) to               (3);
\draw (3) to               (4);
\draw (3) to  [bend left]  (4);
\draw (3) to  [bend right] (4);
\draw (4) to  [bend left]  (5);
\draw (4) to  [bend right] (5);
\draw (4) to               (7);
\draw (5) to  [bend left]  (6);
\draw (5) to  [bend right] (6);
\draw (6) to  [bend left]  (3);
\draw (6) to  [bend right] (3);

\node[compressed] (A) at (5.3,0) {ACTA};
\node[compressed] (B) at (7.0,0) {TACG};
\node[compressed] (C) at (7.0,1) {CGTA};
\node[compressed] (D) at (8.7,0) {CG\$};
\draw (A) to                   (B);
\draw (B) to                   (D);
\draw (B) to [out=40, in=-40]  (C);
\draw (B) to [bend right] (C);
\draw (C) to [bend right]  (B);
\draw (C) to [out=-140, in=140] (B);
\end{tikzpicture}
}
\end{center}
\caption{The de Bruijn graph for $k=3$ and the string ACTACGTACGTACG\$ is shown
on the left, while its compressed counterpart is shown on the right.}
\label{fig:de Bruijn graph}
\end{figure}

Fig.\ \ref{fig:graph G} shows how splitMEM represents the compressed de Bruijn
graph $G$ for $k=3$ and the string $S=$ ACTACGTACGTACG\$.
Each node corresponds to a substring $\omega$ of $S$ and
consists of the components $(id,len,posList,adjList)$,
where $id$ is a natural number that uniquely identifies the node,
$len$ is the length $|\omega|$ of $\omega$,
$posList$ is the list of positions at which $\omega$ occurs in $S$
(sorted in ascending order), and $adjList$ is the list of the successors
of the node (sorted in such a way that the walk through
$G$ that gives $S$ is induced by the adjacency lists: if node $G[id]$ is
visited for the $i$-th time, then its successor is the node that can be
found at position $i$ in the adjacency list of $G[id]$).

\begin{figure}[H]
\[
\begin{array}{|c|c|c|c|c|}
\hline
id &len& posList & adjList& \omega\\\hline
1 & 4 & [5,9]   & [2,2]&\texttt{CGTA} \\\hline
2 & 4 & [3,7,11]& [1,1,4]&\texttt{TACG} \\\hline
3 & 4 & [1]     & [2]&\texttt{ACTA}  \\\hline
4 & 3 & [13]    & [~]&\texttt{CG\$} \\\hline
\end{array}
\]
\caption{Explicit representation of the compressed de Bruijn graph
from Fig.\ \ref{fig:de Bruijn graph}.
\label{fig:graph G}}
\end{figure}

The nodes in the compressed de Bruijn graph of a pan-genome
can be categorized as follows:
\begin{itemize}
\item a uniqueNode represents a unique substring
in the pan-genome and has a single start position
(i.e., $posList$ contains just one element)
\item a repeatNode represents a substring that occurs at least twice in the
pan-genome, either as a repeat in a single genome or as a segment shared by
multiple genomes.
\end{itemize}
In pan-genome analysis, $S$ is the concatenation of multiple
genomic sequences, where the different sequences are separated
by a special symbol $\#$.
(In theory, one could use pairwise different symbols to separate the
sequences, but in practice this would blow up the alphabet.)
This has the effect that $\#$ may be part of a repeat.
In contrast to splitMEM, our algorithm treats the different occurrences of
$\#$ as if they were different characters. Consequently,
$\#$ will not be a part of a repeat. In our approach, each occurrence of
$\#$ will be the end of a stop node (i.e., there is a
stop node for each sequence).

According to \cite{MAR:LEE:SCH:2014}, the compressed de Bruijn graph
is most suitable for pan-genome analysis:
``This way the complete pan-genome will be represented in a compact
graphical representation such that the shared/strain-specific status
of any substring is immediately identifiable, along with the context
of the flanking sequences. This strategy also enables powerful
topological analysis of the pan-genome not possible from a linear
representation.'' It has one defect though: it is not possible to
search efficiently for certain nodes and then to explore the graph in the
vicinity of these nodes. A user might, for example, want to search for
a certain allele in the pan-genome and---if it is present---to examine
the neighborhood of that allele in the graph.
Here, we propose a new space-efficient representation of the
compressed de Bruijn graph that adds exactly this functionality.

We store the graph in an array $G$ of length $N$, where
$N$ is the number of nodes in the compressed de Bruijn graph.
Moreover, we assign to each node a unique identifier $id \in \{1,\dots,N\}$.
A node $G[id]$ now has the form $(len,lb,size,\suffixlb)$, where
\begin{itemize}
\item $len$ is the length of the string $\omega = S[\SA[lb]..\SA[lb]+len-1]$
that corresponds to the node with identifier $id$
\item $[lb..lb+size-1]$ is the $\omega$-interval and $size$ is the size of the $\omega$-interval
\item $[\suffixlb..\suffixlb +size-1]$ is the interval of the $k$ length suffix of $\omega$
\end{itemize}
There is one exception though: the sentinel $\$$ and each occurrence of
the separator $\#$ will be the end of a stop node. Clearly, the suffix $\$$
of $S$ appears at index $1$ in the suffix array because $\$$ is the smallest
character in the alphabet. The suffix array interval of $\$$ is $[1..1]$,
so we set $\suffixlb = 1$. Analogously, a suffix of $S$ that starts with $\#$
appears at an index $j \in \{2,\dots,\seqnumber\}$ in the suffix array (where
$\seqnumber$ is the number of sequences in $S$) because $\#$ is the second
smallest character in the alphabet, so we set $\suffixlb = j$.

Fig.\ \ref{fig:space-efficient representation} shows an example.
Henceforth this representation will be called implicit representation,
while the representation from Fig.\ \ref{fig:graph G} will be called
explicit representation.
It is clear that in the implicit representation the list of all positions at
which $\omega$ occurs in $S$ can be computed as follows:
$[\SA[i] \mid lb \leq i \leq lb+size-1]$. It will be explained later,
how the graph can be traversed and how a pattern can be searched for.
We shall see that this can be done efficiently by means of the fourth
component $\suffixlb$.

\begin{figure}[H]
\[
\begin{array}{|c|c|c|c|c|c|}
\hline
id &len& lb &size & \suffixlb & \omega\\\hline
1 & 4 & 13 & 3 & 2 &\texttt{TACG}\\\hline
2 & 4 &  5 & 1 & 9 &\texttt{ACTA} \\\hline
3 & 4 &  7 & 2 & 11 &\texttt{CGTA}\\\hline
4 & 3 &  6 & 1 & 1 &\texttt{CG\$}\\\hline
\end{array}
\]
\caption{Implicit representation of the compressed de Bruijn graph
from Fig.\ \ref{fig:de Bruijn graph}.
\label{fig:space-efficient representation}}
\end{figure}

\section{Construction algorithm}
\label{sec-Construction of a compressed de Bruijn graph}

We will build the implicit representation of the compressed de Bruijn graph
directly from an FM-index (the wavelet tree of the $\BWT$) of $S$,
using Lemma \ref{lem-maximal repeat} (the simple proof is omitted).

\begin{lemma}
\label{lem-maximal repeat}
Let $v$ be a node in the compressed de Bruijn graph
and let $\omega$ be the string corresponding to $v$.
If $v$ is not the start node, then it has at least two different
predecessors if and only if the length $k$ prefix of $\omega$ is a
left-maximal repeat. It has at least two different successors if and only
if the length $k$ suffix of $\omega$ is a right-maximal repeat.
\end{lemma}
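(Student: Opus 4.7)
The plan is a two-stage reduction: first, pass from the compressed de Bruijn graph to its uncompressed counterpart by matching predecessors of $v$ with predecessors of the length-$k$ prefix of $\omega$ (call this $k$-mer $u_p$); second, translate ``distinct predecessors of $u_p$ in the uncompressed graph'' into the combinatorial condition of being a left-maximal repeat.

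For the first stage I would argue as follows. Because $v$ is not the start node and the compression merges every non-branching chain maximally, $u_p$ cannot have been absorbed into its (putative) predecessor's chain. Hence every predecessor $u'$ of $u_p$ in the uncompressed graph is the last $k$-mer of a unique compressed node $v'$, and the edge $(u',u_p)$ lifts to a unique edge $(v',v)$ in the compressed graph. Since each $k$-mer belongs to exactly one compressed node, distinct $u'$ give rise to distinct $v'$, so $v$ has as many predecessors in the compressed graph as $u_p$ has in the uncompressed graph.

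For the second stage, the predecessors of $u_p$ in the uncompressed graph are, by the definition of a de Bruijn edge, exactly the $k$-mers $c\cdot u_p[1..k-1]$ for characters $c$ such that $c\,u_p$ occurs in $S$. Letting $[i..j]$ be the $u_p$-interval, the number of such predecessors equals $|\{\BWT[x]\mid x\in[i..j]\}|$, which is $\geq 2$ precisely when $u_p$ is a left-maximal repeat in the sense of Section \ref{sec-Preliminaries}; note that two such predecessors already force $u_p$ to occur at least twice, so the ``repeat'' part is automatic. The analogous claim for successors and the right-maximal length-$k$ suffix follows by the symmetric argument, using the set $\{S[\SUF[x]+k]\mid x\in[i'..j']\}$ for the interval $[i'..j']$ of the $k$-suffix.

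The main obstacle is the bookkeeping in stage one: one must invoke maximality of the compression to exclude the case where $u_p$'s unique predecessor could still have been merged into $v$'s chain, and one must handle the degenerate situations in which $|\omega|=k$ (so the $k$-prefix coincides with the $k$-suffix and $v$ consists of a single $k$-mer) or in which $v$ has a self-loop (so some $v'$ coincides with $v$ itself). None of these cases creates a genuine difficulty, but they are where the argument could easily miscount.
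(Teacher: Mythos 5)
The paper declares this proof ``simple'' and omits it, so there is no in-paper argument to compare against; your proposal supplies what is essentially the canonical one. The two-stage reduction --- predecessors of $v$ in the compressed graph correspond to predecessors of the $k$-prefix $u_p$ in the uncompressed graph, which in turn correspond to distinct characters in $\BWT$ over the $u_p$-interval --- is the right decomposition, and your stage-one injectivity observation (each uncompressed predecessor $u'$ is the last $k$-mer of its chain, since its successor $u_p$ lies in a different chain; hence distinct $u'$ sit in distinct compressed nodes) is exactly the point that makes the count transfer.

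One misattribution is worth fixing, because it hides where the hypothesis ``$v$ is not the start node'' is actually doing work. It is not what prevents $u_p$ from being absorbed into a predecessor's chain: that is automatic from $u_p=\omega[1..k]$, since if $u_p$ had been merged with a predecessor then that predecessor would lie in $v$'s chain and $\omega$ would not begin with $u_p$. The hypothesis is needed in stage two. It guarantees $u_p\neq S[1..k]$ (every $k$-mer belongs to exactly one compressed node, and $S[1..k]$ belongs to the start node), hence $\SA[x]\neq 1$ for every $x$ in the $u_p$-interval $[i..j]$, hence $\SNT\notin\{\BWT[x]\mid x\in[i..j]\}$. Without this your claimed identity ``number of uncompressed predecessors $=|\{\BWT[x]\mid x\in[i..j]\}|$'' overcounts by one whenever $u_p$ occurs at position $1$, because the $\SNT$ entry reflects only the convention $S[0]:=\SNT$ and does not correspond to an actual $(k+1)$-mer $cu_p$ occurring in $S$. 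It also supplies the fact that $u_p$ has at least one uncompressed predecessor, so there is something to count. The successor direction needs no such caveat: the $k$-suffix of $\omega$ contains $\SNT$ only for the stop node, in which case that suffix occurs once, is not a repeat, and both sides of the biconditional are false.
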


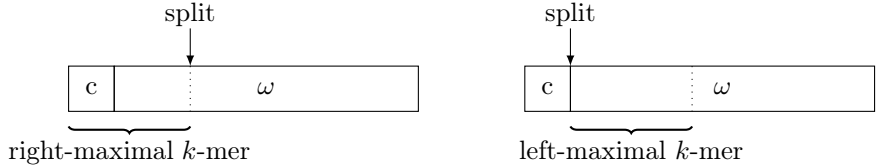
\begin{figure}[H]
\begin{tikzpicture}[
x=1.00cm, y=1.00cm,
]

\node[draw, text width=6mm, minimum height=6mm, inner sep=0mm, outer sep=0mm, align=center] (c1) at (0,0) {c};
\node[draw, text width=40mm, minimum height=6mm, inner sep=0mm, outer sep=0mm, align=center] (w1) at (23mm,0) {\(\omega\)};
\draw [thick, decoration={brace, mirror, raise=0.5cm}, decorate] (c1.west) to node [anchor=north, yshift=-0.55cm] {right-maximal $k$-mer} ([xshift=-10mm]w1);
\draw[dotted] ([xshift=-10mm]w1.north) to ([xshift=-10mm]w1.south);
\draw[-latex] ([yshift=5mm, xshift=-10mm]w1.north) to ([xshift=-10mm]w1.north) node[above, yshift=4mm] {split};

\node[draw, text width=6mm, minimum height=6mm, inner sep=0mm, outer sep=0mm, align=center] (c2) at (60mm,0) {c};
\node[draw, text width=40mm, minimum height=6mm, inner sep=0mm, outer sep=0mm, align=center] (w2) at (83mm,0) {\(\omega\)};
\draw [thick, decoration={brace, mirror, raise=0.5cm}, decorate] (c2.east) to node [anchor=north, yshift=-0.55cm] {left-maximal $k$-mer} ([xshift=-4mm]w2);
\draw[dotted] ([xshift=-4mm]w2.north) to ([xshift=-4mm]w2.south);
\draw[-latex] ([yshift=5mm, xshift=+3mm]c2.north) to ([xshift=+3mm]c2.north) node[above, yshift=4mm] {split};
\end{tikzpicture}
\caption{The string $\omega$ must be split if
the length $k$ prefix of $c\omega$ is a right-maximal repeat or
the length $k$ prefix of $\omega$ is a left-maximal repeat.}
\label{fig:splits}
\end{figure}

The general idea behind our algorithm is as follows. Compute the suffix array
intervals of all right-maximal $k$-mers. For each such $k$-mer
$v$, compute all $cv$-intervals, where $c\in \Sigma$.
Then, for each $u=cv$, compute all $bu$-intervals, where
$b\in \Sigma$, etc. In other words, we start with all right-maximal $k$-mers
and extend them as long as possible (and in all possible ways with
the procedure $getIntervals$),
character by character, to the left. According to Lemma
\ref{lem-maximal repeat}, the left-extension of a string
$\omega$ must stop if (i) the length $k$ prefix
of $\omega$ is a left-maximal repeat (this is the case if the
procedure $getIntervals$ applied to the $\omega$-interval returns a
non-singleton list). It must also stop if
(ii) the length $k$ prefix $v$ of $c\omega$ is a right-maximal repeat
for some $c\in \Sigma$; see Fig.\ \ref{fig:splits}.
This is because by Lemma \ref{lem-maximal repeat} there is a node $uv$,
$u\in \Sigma^*$, in the compressed de Bruijn graph with at least two different
successors (the length $k$ suffix $v$ of $uv$ is a right-maximal repeat).
Consequently, there must be a directed edge $(uv,\omega)$ in the
compressed de Bruijn graph. In the following, we will explain the different
phases of the algorithm in detail.

\subsection{Computation of right-maximal $k$-mers and node identifiers}
\label{sec-Computation of right-maximal k-mers}

The construction algorithm uses two bit vectors $\BVr$ and $\BVl$.
To obtain the bit vector $\BVr$, we must compute all right-maximal $k$-mers
and their suffix array intervals. Let $u$ be a right-maximal
$k$-mer and consider the $u$-interval $[lb..rb]$ in the suffix array.
Note that (1) $\LCP[lb] < k$ and (2) $\LCP[rb+1] < k$.
Since $u$ is right-maximal, $u$ is the longest common prefix of all suffixes
in the interval $[lb..rb]$. This implies
(3) $\LCP[j] \geq k$ for all $j$ with $lb+1\leq j \leq rb$ and
(4) $\LCP[j] = k$ for at least one $j$ with $lb+1\leq j \leq rb$
(in the terminology of \cite{ABO:KUR:OHL:2004},
$[lb..rb]$ is an lcp-interval of lcp-value $k$).
It follows as a consequence that the bit vector $\BVr$ can be calculated
with the help of the $\LCP$-array. Using the algorithm of
\cite{BEL:GOG:OHL:SCH:2013}, Algorithm \ref{alg:debruijnhelper} constructs
the $\LCP$-array directly from the $\BWT$ in $O(n \log \sigma)$ time, where $\sigma$ is the size of
the alphabet. It is not difficult to verify that lines
\ref{begin k-mers} to \ref{end k-mers} of Algorithm \ref{alg:debruijnhelper}
compute all suffix array intervals of right-maximal $k$-mers.
Furthermore, on lines \ref{set bitvector at lb} and \ref{end k-mers}
the boundaries $lb$ and $rb=i-1$ of the $k$-mer intervals are marked by
setting the entries of $\BVr$ at these positions to $1$. On line
\ref{add node to G}, the node $(lb,k,i-lb,lb)$ having the current value
of the variable $counter$ as identifier is added to the graph $G$.
In contrast to the last two components, the first two components of a node
may change later (they will change when a left-extension is possible).
On line \ref{add node to Q}, the node identifier is added to the queue
$Q$ and then $counter$ is incremented by one.

\begin{algorithm}[H]
  \caption{Construction of the bit vectors $\BVr$ and $\BVl$.
    \label{alg:debruijnhelper}}
  \begin{algorithmic}[1]
    \Function{create-bit-vectors}{$k, \BWT, G, Q$}
      \State{compute the $\LCP$-array with the help of the $\BWT$}
      \State{compute the array $C$ of size $\sigma$}\label{C-array}
                  \State{initialize two bit vectors $\BVr$ and $\BVl$ of length $n$ with zeros}
      \State{$lb \gets 1, kIndex \gets 0, \mathit{lastdiff} \gets 0$, $open \gets $ false, $counter\gets 1$}
      \For{$i \gets 2$ \Keyw{to} $n+1$} \Comment{$\LCP[1] = \LCP[n+1] = -1$}\label{for-loop start}
          \State{increment $C[\BWT[i-1]]$ by one}\label{increment C-array}
          \If{$\LCP[i] \geq k$}\label{begin k-mers}
		\Let{$open$}{true}
		\If{$\LCP[i] = k$}
			\Let{$kIndex$}{$i$}
		\EndIf
          \Else \Comment{$\LCP[i] < k$}
		\If{$open$}
			\If{$kIndex > lb$}
				\Let{$\BVr[lb]$}{$1$}\label{set bitvector at lb}
				\Let{$\BVr[i-1]$}{$1$}\label{end k-mers}
                                \Let{$G[counter]$}{$(k,lb,i-lb,lb)$}\label{add node to G}
                                \State{$enqueue(Q,counter)$}\label{add node to Q}
                                \Let{$counter$}{$counter+1$}
			\EndIf
			\If{$\mathit{lastdiff} > lb$}\label{lastdiff}
			        \For{$j \gets lb$ \Keyw{to} $i-1$}
					\Let{$c$}{$\BWT[j]$}\label{bit vector 3 start}
                                        \If{$c \notin \{\#,\$\}$}\Comment{stop nodes will get different identifiers}
					     \Let{$\BVl[C[c]]$}{$1$}\label{bit vector 3 stop}\label{set bit in vector 3}
			                \EndIf
			        \EndFor
			\EndIf
			\Let{$open$}{false}
		\EndIf
		\Let{$lb$}{$i$}
          \EndIf
          \If{$\BWT[i] \neq \BWT[i-1]$}\label{set lastdiff start}
                \Let{$\mathit{lastdiff}$}{$i$}\label{set lastdiff stop}
          \EndIf
      \EndFor\label{for-loop stop}
      \Let{$open$}{false}\label{refine bit vector 3 start}
       \For{$i \gets 1$ \Keyw{to} $n+1$}
          \If{$open$}
		\Let{$\BVl[i]$}{0}
		\If{$\BVr[i] = $ $1$}
			\Let{$open$}{false}
		\EndIf
          \ElsIf{$\BVr[i] =$ $1$}
		\Let{$\BVl[i]$}{0}
		\Let{$open$}{true}
          \EndIf
      \EndFor\label{refine bit vector 3 stop}
      \State{\Return $(\BVr, \BVl)$}
      \EndFunction
  \end{algorithmic}
\end{algorithm}

We would like to stress that all right-maximal $k$-mers
can be determined without the entire $\LCP$-array. In order to verify whether
or not an interval satisfies properties (1)--(4), it is sufficient to compute
all entries $\leq k$ in the $\LCP$-array (the others have a value $>k$).
Since the algorithm of \cite{BEL:GOG:OHL:SCH:2013} calculates entries
in the $\LCP$-array in ascending order, it is ideal for our purposes.
We initialize an array $L$ with values $2$ and set $L[1]=0$ and $L[n+1]=0$.
Two bits are enough to encode the case ``$< k$'' by $0$, the case ``$= k$''
by $1$, and the case ``$> k$'' by $2$ (so initially all entries
in the $\LCP$-array are
marked as being $>k$, except for $L[1]$ and $L[n+1]$, which are marked as
being $<k$). Then, for $\ell$ from $0$ to $k-1$, the algorithm of
\cite{BEL:GOG:OHL:SCH:2013} calculates all indices $p$
with entries $\LCP[p] = \ell$ and sets $L[p] = 0$. Furthermore, it continues
to calculates all indices $q$ with entries $\LCP[q] = k$ and sets $L[q] = 1$.
Now the array $L$ contains all the information that is needed to compute
right-maximal $k$-mers.

As already mentioned, in pan-genome analysis
$S=S^1\#S^2\#\dots S^{\seqnumber-1}\#S^\seqnumber\$$
is the concatenation of multiple genomic
sequences $S^1,\dots,S^\seqnumber$, separated by a special symbol $\#$.
Our algorithm treats the different occurrences of
$\#$ as if they were different characters.
Assuming that $\#$ is the second smallest character, this can be
achieved as follows. As explained above,
all right-maximal $k$-mers can be determined without the
entire $\LCP$-array if the algorithm in \cite{BEL:GOG:OHL:SCH:2013} is used.
If there are $\seqnumber-1$ occurrences of $\#$ in total and this algorithm
starts with $\seqnumber-1$ singleton intervals $[s..s]$,
$2\leq s \leq \seqnumber$, instead of the
$\#$-interval $[2..\seqnumber]$, then the different occurrences of
$\#$ are treated as if they were different characters.

Bit vector $\BVl$ is computed on lines \ref{for-loop start} to
\ref{for-loop stop} of Algorithm \ref{alg:debruijnhelper}
as follows: If the suffix array interval $[lb..rb]$ of a repeat $\omega$ of
length $\geq k$ is detected, then it must be checked whether or
not $\omega$ is left-maximal (note that $\RightBoundary = i-1$).
Recall that $\omega$ is a left-maximal repeat if and only if
$|\{\BWT[\LeftBoundary],\BWT[\LeftBoundary+1],\dots, \BWT[\RightBoundary]\}| \geq 2$. Algorithm \ref{alg:debruijnhelper} checks this condition
by keeping track of the largest index $\mathit{lastdiff}$ at which the
characters $\BWT[\mathit{lastdiff}-1]$ and $\BWT[\mathit{lastdiff}]$ differ;
see lines \ref{set lastdiff start} and \ref{set lastdiff stop}.
Since $\mathit{lastdiff} \leq \RightBoundary = i-1$, the characters
$\BWT[\LeftBoundary],\BWT[\LeftBoundary+1],\dots, \BWT[\RightBoundary]$
are not all the same if and only if $\mathit{lastdiff} > \LeftBoundary$.
If this condition on line \ref{lastdiff} evaluates to true, then
for each $c\notin \{\#,\$\}$ in $\BWT[lb..rb]$ the algorithm
sets $\BVl[\LF[q]]$ to $1$ in lines
\ref{bit vector 3 start} to \ref{bit vector 3 stop}, where $q$ is the index
of the last occurrence of $c \in \BWT[lb..rb]$
and $\LF$ is the last-to-first mapping. How this is done by means of
the $C$-array will be explained below. So a one in $\BVl$ marks
a $k$-mer that precedes a left-maximal $k$-mer. Since we are only interested
in those $k$-mers that are not right-maximal (right-maximal $k$-mers
are already covered by bit vector $\BVr$), lines
\ref{refine bit vector 3 start} to \ref{refine bit vector 3 stop}
of Algorithm \ref{alg:debruijnhelper} reset those one-bits in $\BVl$ to zero
that mark a right-maximal $k$-mer.

It remains for us to explain the computation of the $\BVl$ vector with the
$C$-array. After the computation of the $C$-array on line \ref{C-array} of
Algorithm \ref{alg:debruijnhelper}, for each $c \in \Sigma$,
$C[c]$ is the overall number of occurrences of characters in $S$ that are
strictly smaller than $c$. Moreover, after line \ref{increment C-array} of
Algorithm \ref{alg:debruijnhelper} was executed, we have
$C[\BWT[i-1]]=\LF[i-1]$ (to see this, recall from
Section \ref{sec-Preliminaries} how the $\LF$-mapping can be
computed from the $\BWT$).
Thus, when the for-loop on lines \ref{for-loop start} to
\ref{for-loop stop} of Algorithm \ref{alg:debruijnhelper}
is executed for a certain value of $i$, we have $C[c]=\LF[q] $
for each character $c$ in $\BWT[1..i-1]$, where $q$ is the index
of the last occurrence of $c$ in $\BWT[1..i-1]$.
Algorithm \ref{alg:debruijnhelper} uses this fact on line
\ref{set bit in vector 3}: $C[c]=\LF[q] $, where $q$ is the index
of the last occurrence of $c$ in $\BWT[lb..i-1]$.

Apart from the direct construction of the $\LCP$-array
from the $\BWT$, which takes $O(n \log \sigma)$ time,
Algorithm \ref{alg:debruijnhelper} has a linear run-time.
The overall run-time is therefore $O(n \log \sigma)$.

\subsection{Construction of the space-efficient representation}

Algorithm \ref{alg-computation of a compressed de Bruijn graph}
constructs the implicit representation of the compressed de Bruijn
graph. It calls Algorithm \ref{alg:debruijnhelper}, which computes---besides
the two bit vectors $\BVr$ and $\BVl$---the suffix array interval
$[lb..lb+size-1]$ of each right-maximal $k$-mer $\omega$, stores
the quadruple $(k,lb,size,lb)$ at $G[id]$, where $id = (rank_1(\BVr,lb)+1)/2$
(this is because Algorithm \ref{alg:debruijnhelper} computes
right-maximal $k$-mer intervals in lexicographical order),
and adds $id$ to the (initially empty) queue $Q$. The attributes
$G[id].size$ and $G[id].\suffixlb$ will never change, but the attributes
$G[id].len$ and $G[id].lb$ will change when a left-extension is possible.

\begin{algorithm}[H]
  \caption{Construction of the implicit compressed de Bruijn graph.
    \label{alg-computation of a compressed de Bruijn graph}}
\begin{algorithmic}[1]
\Function{create-compressed-graph}{$k, \BWT$}
    \State{create an empty graph $G$}
    \State{create an empty queue $Q$}
    \Let{$(\BVr, \BVl)$}{\textsc{create-bit-vectors}($k, \BWT, G, Q$)}
    \Let{$rightMax$}{$rank_1(\BVr,n)/2$}
    \Let{$leftMax$}{$rank_1(\BVl,n)$}
    \For{$s \gets 1$ \Keyw{to} $\seqnumber$} \Comment{add the stop nodes for the $\seqnumber$ sequences} \label{begin for-loop}
         \Let{$id$}{$rightMax+leftMax+s$}
        \Let{$G[id]$}{$(1,s,1,s)$} \label{insert stopNode into G}
        \State{$enqueue(Q,id)$}
        \Let{$\BVl[s]$}{$0$}\label{end for-loop}

    \EndFor
    \While{$Q$ is not empty}\label{begin while-loop}
        \Let{$id$}{$dequeue(Q)$}
        \Repeat
            \Let{$extendable$}{false}
            \Let{$lb$}{$G[id].lb$}
            \Let{$rb$}{$lb + G[id].size - 1$}
            \Let{$list$}{$getIntervals([lb..rb])$}
                \For{\Keyw{each} $(c, [i..j])$ \Keyw{in} $list$}
                    \Let{$ones$}{$rank_1(\BVr,i)$}
                    \If{$ones$ is even \Keyw{and} $\BVr[i] = 0$}
                        \If{$c \notin \{\#,\SNT\}$}
                            \If{$list$ contains just one element}\Comment{Case 1}
                                \Let{$extendable$}{true}
                                \Let{$G[id].len$}{$G[id].len+1$}
                                \Let{$G[id].lb$}{$i$}
                            \Else \Comment{Case 2}\label{Case 2}
                                \Let{$newId$}{$rightMax + rank_1(\BVl,i-1)+1$}
                                \Let{$G[newId]$}{$(k,i,j-i+1,i)$}
                                \State{$enqueue(Q,newId)$}\label{enqueue node2}
                            \EndIf
                        \EndIf
                    \EndIf
                \EndFor
        \Until{\Keyw{not} $extendable$}
    \EndWhile        \label{end while-loop}
\EndFunction
\end{algorithmic}
\end{algorithm}

In the for-loop on lines \ref{begin for-loop}--\ref{end for-loop},
the stop nodes are added to $G$ and their identifiers are added to $Q$.
In the while-loop on lines \ref{begin while-loop}--\ref{end while-loop},
as long as the queue $Q$ is not empty,
the algorithm removes an identifier $id$ from $Q$ and in a repeat-loop
computes $list = getIntervals([lb..rb])$, where $lb=G[id].lb$ and
$rb= lb + G[id].size - 1$. During the repeat-loop,
the interval $[lb..rb]$ is the suffix array interval of some string
$\omega$ of length $G[id].len$. In the body of the repeat-loop, a flag
$extendable$ is set to false. The procedure call
$getIntervals([lb..rb])$ then returns the list $list$
of all $c\omega$-intervals. At this point, the algorithm tests whether or
not the length $k$ prefix of $c\omega$ is a right-maximal repeat.
It is not difficult to see that the length $k$ prefix of $c\omega$
is a right-maximal repeat if and only if the $c\omega$-interval $[i..j]$
is a subinterval of a right-maximal $k$-mer interval. Here,
the bit vector $\BVr$ comes into play. At the beginning of
Algorithm \ref{alg-computation of a compressed de Bruijn graph},
all suffix array intervals of right-maximal $k$-mers have been computed
and their boundaries have been marked in $\BVr$. It is crucial to note
that these intervals are disjoint. Lemma \ref{lem-correctness} shows
how the bit vector $\BVr$ can be used to test for non-right-maximality.

\begin{lemma}
\label{lem-correctness}
The $c\omega$-interval $[i..j]$
is not a subinterval of a right-maximal $k$-mer interval if and only if
$rank_1(\BVr,i)$, the number of ones in $\BVr$ up to (and including) position $i$,
is even and $\BVr[i] = 0$.
\end{lemma}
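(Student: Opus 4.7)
The plan is to exploit two structural facts about $\BVr$: first, the right-maximal $k$-mer intervals are pairwise disjoint (they are lcp-intervals of lcp-value $k$, which form an antichain in the interval tree); second, Algorithm~\ref{alg:debruijnhelper} sets $\BVr[p]=1$ exactly at the left and right boundaries of these intervals. Enumerating the intervals in lexicographic order as $[lb_1..rb_1], \ldots, [lb_m..rb_m]$ with $rb_t < lb_{t+1}$, the 1-positions of $\BVr$ therefore form the alternating sequence $lb_1, rb_1, lb_2, rb_2, \ldots, lb_m, rb_m$.

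Given this structure, a routine case analysis on the position of $i$ relative to these intervals yields the following characterization: $rank_1(\BVr,i)$ is even and $\BVr[i] = 0$ if and only if $i$ lies strictly outside every right-maximal $k$-mer interval. Indeed, as $p$ advances, $rank_1(\BVr,p)$ increases by one at each $lb_t$ (making it odd) and again at each $rb_t$ (making it even), while $\BVr[p]=0$ holds precisely when $p$ is not a boundary. The conjunction isolates positions that have passed an equal number of left and right boundaries and are themselves not a boundary---exactly the positions strictly before $lb_1$, strictly after $rb_m$, or strictly between consecutive intervals.

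The final step is to translate ``$i$ lies outside every right-maximal $k$-mer interval'' into ``$[i..j]$ is not a subinterval of any right-maximal $k$-mer interval.'' One direction is immediate: if $i$ lies outside all such intervals, then $[i..j]$ cannot be contained in any of them. For the converse, suppose $i \in [lb_t..rb_t]$ for some $t$, and let $v$ be the corresponding right-maximal $k$-mer. Since any two suffix-array intervals are either disjoint or nested, and since $|c\omega| > k = |v|$ forces the $c\omega$-interval to be no larger than the $v$-interval, we conclude $[i..j] \subseteq [lb_t..rb_t]$.

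The only delicate point---and the one I expect to be the main obstacle to write down cleanly---is confirming the pairing structure of $\BVr$: that Algorithm~\ref{alg:debruijnhelper} marks exactly two bits per right-maximal $k$-mer interval (one at each boundary) and marks no other positions. This follows from the characterization of right-maximal $k$-mer intervals as lcp-intervals of lcp-value $k$ satisfying properties (1)--(4) of Section~\ref{sec-Computation of right-maximal k-mers}, together with inspection of lines~\ref{set bitvector at lb} and~\ref{end k-mers} of the algorithm, which are executed exactly when such an interval has been detected.
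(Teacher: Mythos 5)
Your proof is correct and takes essentially the same approach as the paper's: both rely on the disjointness of the right-maximal $k$-mer intervals, the fact that $\BVr$ marks exactly their boundaries (so the 1-bits alternate left/right), and the laminar-family property of suffix array intervals to convert ``$i$ lies in some right-maximal $k$-mer interval'' into ``$[i..j]$ is a subinterval of one.'' The paper's proof is terser and leaves the laminar-family step implicit (it simply asserts that ``$[i..j]$ cannot overlap with a right-maximal $k$-mer interval'' once non-containment is assumed), whereas you spell that step out; the logical content is the same.
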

\begin{proof}
``only-if:'' Suppose $[i..j]$ is not a subinterval of a right-maximal $k$-mer
interval. Since $[i..j]$ cannot overlap with a right-maximal $k$-mer interval,
it follows that $rank_1(\BVr,i)$ must be even and $\BVr[i..j]$ contains
only zeros.\\
``if:'' Suppose $[i..j]$ is a subinterval of a right-maximal $k$-mer interval
$[p..q]$. If $i\neq j$, then $rank_1(\BVr,i)$ must be odd. If $i = j$, then
$rank_1(\BVr,i)$ may be even. But in this case $i$ must be the right boundary
of the interval $[p..q]$, so $\BVr[i] = \BVr[q] =1$.
\end{proof}

Now, the algorithm proceeds by case analysis.
If the length $k$ prefix of $c\omega$ is a right-maximal repeat,
there must be a node $v$ that ends with the length $k$
prefix of $c\omega$ (note that $c\omega[1..k]$ and $\omega$ have a
suffix-prefix-overlap of $k-1$ characters),
and this node $v$ will be detected by a computation
that starts with the $k$-mer $c\omega[1..k]$. Consequently, the
computation stops here.
If the length $k$ prefix of $c\omega$ is not a right-maximal repeat,
one of the following two cases occurs:
\begin{enumerate}
\item If $list$ contains just one element $(c, [i..j])$, then
$\omega$ is not left-maximal. In this case, the algorithms sets
$extendable$ to true, $G[id].lb$ to $i$, and increments
$G[id].len$ by one. Now $G[id]$ represents the $c\omega$-interval $[i..j]$
and the repeat-loop continues with this interval. Note that
$G[id].size = j-i+1$ because $\omega$ is not left-maximal.
\item Otherwise, $\omega$ is left-maximal.
In this case, a split occurs (so the attributes of $G[id]$ will not change
any more) and
Algorithm \ref{alg-computation of a compressed de Bruijn graph}
must continue with the $k$-mer prefix $x=c\omega[1..k]$ of $c\omega$.
For the correctness of the algorithm, it is
important to note that the interval $[i..j]$ is the
$x$-interval; see Lemma \ref{lem-correctness2}.
We use the bit vector $\BVl$ to assign the unique identifier
$newId=rightMax + rank_1(\BVl,i-1)+1$ to the next node, which
corresponds to (or ends with) $x$ (recall that $rightMax$ is the number of
all right-maximal $k$-mers and that $x$ is not a right-maximal $k$-mer).
So a quadruple $(k,i,j-i+1,i)$ is inserted at $G[newId]$
and $newId$ is added to $Q$.
\end{enumerate}

\begin{lemma}
\label{lem-correctness2}
Consider the $c\omega$-interval $[i..j]$ in Case 2 of
Algorithm \ref{alg-computation of a compressed de Bruijn graph}
(beginning at line \ref{Case 2}).
The interval $[i..j]$ coincides with the $c\omega[1..k]$-interval $[p..q]$.
\end{lemma}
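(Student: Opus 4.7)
The plan is to show $[i..j]=[p..q]$ by proving the two inclusions. The easy direction $[i..j]\subseteq[p..q]$ follows because $c\omega[1..k]$ is a prefix of $c\omega$, so every suffix of $S$ beginning with $c\omega$ also begins with $c\omega[1..k]$. For the substantive inclusion $[p..q]\subseteq[i..j]$, I need to show that every occurrence of $c\omega[1..k]$ in $S$ actually extends to an occurrence of $c\omega$.

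I would prove this by induction on $m$: for every $m$ with $k\le m\le|c\omega|$, the $c\omega[1..m]$-interval coincides with $[p..q]$; taking $m=|c\omega|$ gives the claim. The inductive step reduces to showing that the length-$k$ suffix of $c\omega[1..m]$, namely the $k$-mer $c\omega[m-k+1..m]$, is not right-maximal in $S$: then that $k$-mer is followed in $S$ by a single character, which must be $c\omega[m+1]$ because at least the occurrences inherited from $c\omega$ are followed by $c\omega[m+1]$, and hence every occurrence of $c\omega[1..m]$ extends to an occurrence of $c\omega[1..m+1]$.

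So the heart of the proof is to certify non-right-maximality of each of the $k$-mers $c\omega[1..k],c\omega[2..k+1],\dots,c\omega[\ell-k+1..\ell]$, where $\ell=|\omega|$. For $m=k$, the $k$-mer $c\omega[1..k]$ is exactly the one whose non-right-maximality the $\BVr$-test on the current pair $(c,[i..j])$ has just certified via Lemma~\ref{lem-correctness}. For $m\ge k+1$, the identity $(c\omega)[t]=\omega[t-1]$ for $t\ge 2$ rewrites $c\omega[m-k+1..m]$ as $\omega[m-k..m-1]$, so as $m$ ranges over $k+1,\dots,\ell$ the remaining $k$-mers are $\omega[1..k],\omega[2..k+1],\dots,\omega[\ell-k..\ell-1]$. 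These are precisely the length-$k$ prefixes of the successive strings $\omega_s=\omega,\,\omega_{s-1}=\omega[2..\ell],\dots,\omega_1=\omega[\ell-k..\ell]$ produced by the Case~1 extensions that built the current $\omega$ from the initial $k$-mer $\omega_0=\omega[\ell-k+1..\ell]$ stored at $G[id]$ when it was dequeued, and each such Case~1 extension was performed only after the $\BVr$-test on the extended interval certified, again by Lemma~\ref{lem-correctness}, non-right-maximality of the length-$k$ prefix of the new string.

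The main obstacle I anticipate is the bookkeeping in the last paragraph: matching the $\ell-k+1$ $k$-mers whose non-right-maximality the induction requires against the sequence of $\BVr$-tests actually executed (one in the current iteration of the for-loop, and $\ell-k$ during the earlier Case~1 extensions of the same repeat-loop). Once the index shift $(c\omega)[t]=\omega[t-1]$ is handled explicitly, this matching is exact and no further suffix-array reasoning is required.
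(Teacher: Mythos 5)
Your proof is correct but takes a genuinely different route from the paper's. The paper argues by contradiction: assuming $[i..j]\neq[p..q]$, it lets $cu$ be the longest common prefix (of length $\ell\geq k$) of the suffixes in $[p..q]$, notes that $cu$ and hence $u$ must be right-maximal, and derives a contradiction in the cases $\ell=k$ (then $c\omega[1..k]$ is right-maximal, so Case~2 would not have been entered) and $\ell>k$ (then, the paper asserts, $getIntervals$ could not have been applied to the $\omega$-interval at all). That second case is stated very tersely, and the implicit reason is exactly what your argument makes explicit: the length-$k$ suffix $\omega[\ell-k..\ell-1]$ of the right-maximal repeat $u=\omega[1..\ell-1]$ would itself be right-maximal, contradicting the $\BVr$-test that was performed at the corresponding Case~1 extension. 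You instead run a direct forward induction on $m$, showing that each $c\omega[1..m]$-interval equals $[p..q]$, and you ground each inductive step in those same $\BVr$-tests together with the test on $(c,[i..j])$ in the current iteration. Both arguments therefore hinge on the same collection of certified non-right-maximal sliding $k$-mers; the paper's contradiction is shorter because the lcp-interval structure of $[p..q]$ hands it a single witness $cu$ rather than a whole sliding window, while your induction is more explicit about the bookkeeping and in effect supplies the justification the paper's $\ell>k$ case leaves to the reader. Your plan is sound as written; the only small point to keep in mind is that the paper's notion of right-maximality is defined for repeats, so when a sliding $k$-mer happens to occur only once the inductive step holds trivially rather than via the $\BVr$-test (its interval is a singleton inside the singleton $c\omega[1..m]$-interval).
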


\begin{proof}
Clearly, $[i..j]$ is a subinterval of $[p..q]$ because $c\omega[1..k]$ is
a prefix of $c\omega$. For a proof by contradiction, suppose that
$[i..j] \neq [p..q]$. Let $cu$ be the longest common prefix of all suffixes
in the interval $[p..q]$. Note that the length $\ell$ of $cu$ is at least $k$.
Since $[i..j] \neq [p..q]$, it follows that there must be a suffix in the
interval $[p..q]$ that has a prefix $cub$ so that $cu$ is a proper prefix of
$c\omega$ and $b\neq c\omega[\ell+1]$. Consequently,
$cu$ is a right-maximal repeat. Clearly, this implies that
$u$ is a right-maximal repeat as well. We consider two cases:
\begin{enumerate}
\item $\ell=k$: In this case,
Algorithm \ref{alg-computation of a compressed de Bruijn graph}
stops (the length $k$ prefix $cu$ of $c\omega$ is a right-maximal
repeat), so it cannot execute Case 2; a contradiction.
\item $\ell > k$: Note that $u$ has length $\ell-1 \geq k$.
Since $u$ is a right-maximal repeat, it is impossible that
the procedure $getIntervals$ is applied to the $\omega$-interval $[lb..rb]$.
This contradiction proves the lemma.
\end{enumerate}
\end{proof}

As an example, we apply
Algorithm \ref{alg-computation of a compressed de Bruijn graph}
to $k=3$ and the $\LCP$-array and the $\BWT$ of the string
\texttt{ACTACGTACGTACG\$}; see Fig.\ \ref{fig:suffix array}.
There is only one right maximal $k$-mer, \texttt{ACG}, so a node
$(len,lb,size,\suffixlb) = (3,2,3,2)$ is inserted at $G[1]$
and the identifier $1$ is added to the queue $Q$ in
Algorithm \ref{alg:debruijnhelper}.
On line \ref{insert stopNode into G} of
Algorithm \ref{alg-computation of a compressed de Bruijn graph}
the stop node is added to $G$. It has the identifier
$rightMax+leftMax+1 = 1+2+1=4$, so $G[4]$ is set to $(1,1,1,1)$
and $4$ is added to $Q$. In the while-loop, the identifier $1$
of node $(3,2,3,2)$ is dequeued and the procedure call
$getIntervals([2..4])$ returns a list that contains just one interval,
the \texttt{TACG}-interval $[13..15]$. Since $rank_1(\BVr,13) = 2$ is even and
$\BVr[13]=0$, Case 1 applies. So $extendable$ is set to true
and $G[1]$ is modified to $(4,13,3,2)$.
In the next iteration of the repeat-loop, $getIntervals([13..15])$
returns the list $[(\texttt{C},[9..9]), (\texttt{G},[11..12])]$,
where $[9..9]$ is the \texttt{CTACG}-interval and $[11..12]$ is the
\texttt{GTACG}-interval. It is readily verified that Case 2 applies in both
cases. For the \texttt{CTACG}-interval $[9..9]$ we obtain the identifier
$rightMax+rank_1(\BVl,9-1)+1 = 1+0+1=2$, so $G[2]$ is set to $(3,9,1,9)$.
Analogously, the \texttt{GTACG}-interval $[11..12]$ gets the identifier
$rightMax+rank_1(\BVl,11-1)+1 = 1+1+1=3$ and $G[3]$ is set to $(3,11,2,11)$.
Furthermore, the identifiers $2$ and $3$ are added to the queue $Q$. Next, the
identifier $4$ of the stop node $(1,1,1,1)$ is dequeued and the procedure call
$getIntervals([1..1])$ returns a list that contains just one interval,
the \texttt{G\$}-interval $[10..10]$. Case 1 applies, so $G[4]$ is
modified to $(2,10,1,1)$. In the second iteration of the repeat-loop,
$getIntervals([10..10])$ returns the \texttt{CG\$}-interval $[6..6]$.
Again Case 1 applies and $G[4]$ is modified to $(3,6,1,1)$.
In the third iteration of the repeat-loop,
$getIntervals([6..6])$ returns the \texttt{ACG\$}-interval $[2..2]$.
This time, $rank_1(\BVr,2)=1$ is odd and therefore the repeat-loop
terminates. The computation
continues until the queue $Q$ is empty; the final compressed
de Bruijn graph is shown in Fig.\ \ref{fig:space-efficient representation}.

We claim that Algorithm \ref{alg-computation of a compressed de Bruijn graph}
has a worst-case time complexity of $O(n \log \sigma)$
and use an amortized analysis to prove this. Since the
compressed de Bruijn graph has at most $n$ nodes, it is
an immediate consequence that at most $n$ identifiers
enter and leave the queue $Q$ (this covers Case 2).
Case 1 can occur at most $n$ times because there are
at most $n$ left-extensions; so at most $n$ intervals
generated by the procedure $getIntervals$ belong to this category.
Each left-extension eventually ends; so at most $n$ intervals
generated by the procedure $getIntervals$ belong to this category
because there are at most $n$ left-extensions.
In summary, at most $2n$ intervals are
generated by the procedure $getIntervals$. Since this procedure takes
$O(\log \sigma)$ time for each generated interval, the claim follows.

\subsection{Construction of the explicit compressed de Bruijn graph}
\label{sec-Construction of the explicit compressed de Bruijn graph}

In this section, we show how the explicit compressed de Bruijn graph
can be constructed efficiently from the implicit representation.
If the pan-genome consists of $\seqnumber$ sequences, then
$S=S^1\#S^2\#\dots S^{\seqnumber-1}\#S^\seqnumber\$$ and there are $d$ stop
nodes. Since the implicit representation allows for an efficient backward
traversal, there is no need for start nodes. By contrast, the explicit graph
must provide them. That is why
Algorithm \ref{alg-construct the explicit compressed de Bruijn graph}
stores them in an array $StartNodes$ of size $d$.

Algorithm \ref{alg-construct the explicit compressed de Bruijn graph}
starts with the stop node of the last sequence $S^\seqnumber$,
which has identifier $id=rightMax+leftMax+1$. Let $\omega$ be the string
corresponding to node $id$. Since $\omega$ ends with $\$$ and $\$$
appears at position $n$ in $S$, the start position of $\omega$ in $S$
is $pos=n-G[id].len +1$. Consequently, $pos$ is added to the front of
$G[id].posList$ on line \ref{add pos to stop node} of
Algorithm \ref{alg-construct the explicit compressed de Bruijn graph}.
Next, we have to find the predecessor of node $id$. It is not
difficult to see that $idx=G[id].lb$ is the index in the suffix array
at which the suffix $S_{pos}$ can be found (note that $S_{pos}$ has
$\omega$ as a prefix). Clearly, $i=LF(idx)$ is the index of
the suffix $S_{pos-1}$ in the suffix array. Note that $S_{pos-1}$ has
$c\omega$ as a prefix, where $c=\BWT[idx]$. If $c$ is not a separator
symbol (i.e., $c \notin \{\#,\SNT\}$), then the predecessor of node
$id$ is the node  $newId$ whose corresponding string $u$
ends with the $k$-mer prefix  $x = c\omega[1..k]$ of $c\omega$.
If $x$ is a right-maximal $k$-mer, then $newId$ is
$(rank_1(\BVr,i)+1)/2$, otherwise it is $rightMax + rank_1(\BVl,i-1)+1$.
Note that $u$ ends at position $pos-1+(k-1)$ in $S$ because $u$ and $\omega$
overlap $k-1$ characters. It follows as a consequence that $u$ starts at
position $newPos = pos-1+k-1-G[newId].len +1 = pos-1-(G[newId].len - k)$.
So the position $newPos$ is added to the front of the position list
of $G[newId]$. Because node $G[id]$ is the successor of node $G[newId]$,
the identifier $id$ is added to the front of the adjacency list $G[newId]$.
To find the predecessor of node $newId$ in the same fashion, we must find the
index $idx$ at which the suffix $S_{newPos}$ can be found in the suffix array.
According to Lemma \ref{lem-correctness3}, this is
$G[newId].lb+(i-G[newId].\suffixlb)$. The while-loop repeats the
search for a predecessor node until a separator symbol is found.
In this case, a start node has been reached and its identifier
is stored in an array $StartNodes$ of size $d$.
Since there are $d$ separator symbols, the whole process is executed $d$ times.

\begin{lemma}
\label{lem-correctness3}
Let $G[id] = (len,lb,size,\suffixlb)$ be a node in the implicit representation
of the compressed de Bruijn graph. If $G[id]$ is not a stop node and
suffix $S_p$ appears at index $i$ in the
interval $[b..e]=[\suffixlb..\suffixlb+size-1]$ (i.e., $\SA[i]=p$), then
the suffix $S_{p+(len-k)}$ appears at index $lb+(i-\suffixlb)$ in the
interval $[lb..lb+size-1]$.
\end{lemma}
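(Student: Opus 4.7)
The plan is to prove the lemma by exhibiting an order-preserving bijection between the $\omega$-interval $[lb..lb+size-1]$ and the $k$-suffix interval $[\suffixlb..\suffixlb+size-1]$ that is realized precisely by the index shift $i \mapsto lb + (i - \suffixlb)$.

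First I would show that the length-$k$ suffix $x$ of $\omega$ occurs in $S$ exactly $size$ times, once at the end of each occurrence of $\omega$. Since $G[id]$ is a node of the maximally compressed de Bruijn graph and is not a stop node, Lemma~\ref{lem-maximal repeat} (applied to each internal $k$-mer along the chain that was contracted into $G[id]$) forces every such $k$-mer to have in-degree one in the uncompressed graph---otherwise the chain would have been split there; see Figure~\ref{fig:splits}. In particular, all occurrences of $x$ in $S$ are preceded by the same character $\omega[len-k]$. Propagating this property backwards from $x$ through the chain, every occurrence of $x$ in $S$ arises as the tail of a unique occurrence of $\omega$. This yields a bijection between the $size$ starting positions of $\omega$ and the $size$ starting positions of $x$, implemented by the positional shift of $len - k$.

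Next I would verify that this bijection preserves lex order of the associated suffixes. For two $\omega$-occurrences $q_1, q_2$, the suffixes $S_{q_1}$ and $S_{q_2}$ share the prefix $\omega$, so their relative order is determined by $S_{q_1 + len}$ versus $S_{q_2 + len}$. The companion suffixes starting at $q_1 + (len - k)$ and $q_2 + (len - k)$ share the prefix $x$, so their order is determined by the very same tail comparison. Consequently the sorted lists of $\omega$- and $x$-positions that fill $\SA[lb..lb + size - 1]$ and $\SA[\suffixlb..\suffixlb + size - 1]$ agree entry by entry under the shift, and reading off the entry at offset $i - \suffixlb$ on both sides yields the identity claimed in the lemma.

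The main obstacle is the first step: extracting the bijection requires a careful appeal to the maximally compressed structure of $G[id]$ together with the hypothesis that it is not a stop node (which prevents a sentinel or separator from aborting the backward propagation along the chain). Once the bijection is in hand, the order-preservation argument is a short calculation and the final substitution of $\SA[i]=p$ is immediate.
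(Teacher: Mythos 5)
Your proof is correct and rests on exactly the same key fact as the paper's---that every $k$-mer strictly interior to the chain contracted into $G[id]$ has a unique left extension, so every occurrence of the $k$-suffix $x$ of $\omega$ extends to a unique occurrence of $\omega$ at a fixed offset. The presentational difference is that the paper packages this incrementally via the $\LF$-mapping: it shows that one application of $\LF$ sends the $q$-th element of the $x$-interval to the $q$-th element of the $cx$-interval (because all those suffixes are preceded by the same character $c$, and prepending a common character preserves lexicographic order), and then iterates $len-k$ times. You instead establish the full position-shift bijection in one step and verify order preservation by a direct tail comparison: both $S_{q_1}$ vs.\ $S_{q_2}$ (sharing prefix $\omega$) and $S_{q_1+(len-k)}$ vs.\ $S_{q_2+(len-k)}$ (sharing prefix $x$) reduce to the same comparison of $S_{q_1+len}$ with $S_{q_2+len}$. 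Your version avoids invoking $\LF$ explicitly and arguably makes the order-preservation step more transparent; the paper's $\LF$-iteration ties more directly to the BWT machinery used elsewhere in the algorithm. Both are sound. (Incidentally, the published statement has a sign slip---the conclusion should read $S_{p-(len-k)}$ rather than $S_{p+(len-k)}$, which is what both your argument and the paper's proof actually establish and what Algorithm~\ref{alg-construct the explicit compressed de Bruijn graph} uses.)
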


\begin{proof}
Let $u$ be the string corresponding to $G[id]$ and let $x$ be the
$k$-mer suffix of $u$. By construction, $[lb..lb+size-1]$ is the
$u$-interval and $[b..e]$ is the $x$-interval in the suffix array.
If $u=x$, then $len=k$, $lb = \suffixlb$, and there is nothing to
show. So suppose $u\neq x$ and let $c$ be the character that
precedes $x$ in $u$ (recall that $x$ is not left-maximal).
Since $S_{\SA[b]} < S_{\SA[b+1]} < \dots < S_{\SA[e]}$,
it follows that $cS_{\SA[b]} < cS_{\SA[b+1]} < \dots < cS_{\SA[e]}$.
In other words, the $cx$-interval contains the suffixes
$S_{\SA[b]-1} < S_{\SA[b+1]-1} < \dots < S_{\SA[e]-1}$.
Consequently, if $i$ is the $q$-th element of $[b..e]$ and $\SA[i]=p$,
then $\LF(i)$ is the $q$-th element of the $cx$-interval and $\SA[\LF(i)]=p-1$
(this implies in particular that $[\LF(b)..\LF(e)]$ is the $cx$-interval).
Iterating this argument $len-k$ times yields the lemma.
\end{proof}

\begin{algorithm}[H]
  \caption{Construction of the explicit compressed de Bruijn graph.
    \label{alg-construct the explicit compressed de Bruijn graph}}
\begin{algorithmic}[1]
\Function{construct-explicit-graph}{$G,\BWT,\LF,\BVr, \BVl$}
    \Let{$i$}{$1$}
    \Let{$pos$}{$n$}
    \For{$s \gets 1$ \Keyw{to} $\seqnumber$}\Comment{there are $d$ occurrences of $\#$ and $\SNT$ in $S$}
        \Let{$id$}{$rightMax+leftMax+i$}
        \Let{$pos$}{$pos-G[id].len +1$}
        \State{add $pos$ to the front of $G[id].posList$}\label{add pos to stop node}
        \Let{$idx$}{$G[id].lb$}
        \While{$\BWT[idx] \notin \{\#,\SNT\}$}
            \Let{$i$}{$LF(idx)$}
            \Let{$ones$}{$rank_1(\BVr,i)$}
            \If{$ones$ is even \Keyw{and} $\BVr[i] = 0$}
                \Let{$newId$}{$rightMax + rank_1(\BVl,i-1)+1$}
            \Else
                \Let{$newId$}{$\lfloor(ones+1)/2\rfloor$}
            \EndIf
            \Let{$newPos$}{$pos-1-(G[newId].len -k)$}
            \State{add $newPos$ to the front of $G[newId].posList$}
            \State{add $id$ to the front of $G[newId].adjList$}
            \Let{$idx$}{$G[newId].lb+(i-G[newId].\suffixlb)$}
            \Let{$id$}{$newId$}
            \Let{$pos$}{$newPos$}
        \EndWhile
        \Let{$StartNodes[d+1-s]$}{$id$}
        \Let{$i$}{$LF(idx)$}
    \EndFor
\EndFunction
\end{algorithmic}
\end{algorithm}

Algorithm \ref{alg-construct the explicit compressed de Bruijn graph}
has a worst-case time complexity of $O(N\log \sigma)$, where
$N$ is the number of edges in the compressed de Bruijn graph.
This is because in each execution of the while-loop an edge is added
to the graph and a value $\LF(idx)$ is computed in $O(\log \sigma)$ time
(all other operations take only constant time). Since the uncompressed
de Bruijn graph has at most $n$ edges, so does the compressed graph.
Hence $N \leq n$. In fact, $N$ is much smaller than $n$ in virtually all cases.
It follows from the preceding section that $N$ can be characterized in terms
of left- and right-maximal $k$-mer repeats. We have seen that the number
of nodes in the compressed de Bruijn graph
equals $|V_1|+|V_2|+d=rightMax+leftMax+d$, where
$V_1 = \{\omega \mid \omega \mbox{ is a right-maximal $k$-mer repeat in } S\}$
and
$V_2 = \{\omega \mid \exists i \in \{1,\dots,n-k\} : \omega = S[i..i+k-1] \notin V_1$ and $S[i+1..i+k] \mbox{ is a left-maximal $k$-mer repeat in } S\}$;
the stop nodes are taken into account by adding $d$.
The number $N$ of edges in the compressed de Bruijn graph therefore is
$|\{i \mid 1 \leq i \leq n-k \mbox{ and } S[i..i+k-1] \in V_1 \cup V_2\}|$.

\section{Operations on the compressed de Bruijn graph}
\label{sec:Operations on the compressed de Bruijn graph}

It is our next goal to show how the combination of the implicit graph
and the FM-index can be used to search for a pattern $P$ of length $m\geq k$.
This is important, for example, if one wants to search for
a certain allele in the pan-genome and---if it is present---to examine
the neighborhood of that allele in the graph.
Algorithm \ref{alg-find path in the compressed de Bruijn graph}
shows pseudo-code for such a search. The main difficulty is to find
the node of the $k$-length suffix of $P$ in the implicit graph. Once we
have found this node, we can use the method introduced in the previous
section to continue the search (where backward search replaces the
$\LF$-mapping).

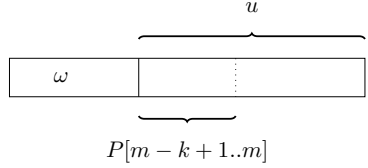
\begin{figure}[H]
\begin{center}
\scalebox{0.85}{
\begin{tikzpicture}[
x=1.00cm, y=1.00cm,
]

\coordinate (a_u) at ( 0mm, 6mm);
\node             at ( 8mm, 3mm) {$\omega$};
\coordinate (b_u) at (20mm, 6mm);
\coordinate (c_u) at (35mm, 6mm);
\coordinate (d_u) at (55mm, 6mm);
\coordinate (a_d) at ([yshift=-6mm]a_u);
\coordinate (b_d) at ([yshift=-6mm]b_u);
\coordinate (c_d) at ([yshift=-6mm]c_u);
\coordinate (d_d) at ([yshift=-6mm]d_u);

\draw         (a_u) rectangle (d_d);
\draw         (b_u) to (b_d);
\draw[dotted] (c_u) to (c_d);
\draw[thick, decoration={brace, mirror, raise=3mm}, decorate] (b_d) to node [anchor=north, yshift=-5.5mm] {$P[m-k+1..m]$} (c_d);
\draw[thick, decoration={brace, mirror, raise=3mm}, decorate] (d_u) to node [anchor=north, yshift=+10mm] {$u$} (b_u);

\end{tikzpicture}
}
\caption{The string $\omega$ has $u$ as suffix and $u$
has $P[m-k+1..m]$ as prefix.
\label{fig-pattern}}
\end{center}
\end{figure}

Using the FM-index, we first find the suffix array interval $[i..j]$ of
the $k$-mer suffix $P[m-k+1..m]$ of $P$. If $i\leq j$ (i.e.,
$P[m-k+1..m]$ occurs in the pan-genome), we search for the node $G[id]$
whose corresponding string $\omega$ contains $P[m-k+1..m]$.
If $P[m-k+1..m]$ is a suffix of $\omega$, then the unknown identifier
$id$ can be determined by lines \ref{begin find id}--\ref{end find id}
in Algorithm \ref{alg-find path in the compressed de Bruijn graph}.
If it is not a suffix of $\omega$, then there is a suffix $u$
of $\omega$ that has $P[m-k+1..m]$ as prefix; see Fig.\ \ref{fig-pattern}.
The key observation is that $[i..j]$ is the suffix array interval of $u$.
Moreover, $u$ can be written as $c_1c_2\dots c_{\ell} x$, where $c_q \in \Sigma$
for $q\in \{1,\dots,\ell\}$ and $x$ is the $k$-mer suffix of $u$.
Note that the value of $\ell$ is unknown.
Since $c_2\dots c_{\ell} x$ is not left-maximal, it follows that
$[\Psi(i)..\Psi(j)]$ is its suffix array interval (this can be proven
by similar arguments as in the proof of Lemma \ref{lem-correctness3}).
Algorithm \ref{alg-find path in the compressed de Bruijn graph}
iterates this process until either on line \ref{stop node id} the identifier
of a stop node or on lines \ref{begin find id}--\ref{end find id}
the identifier of a non-stop-node is found. In the latter case,
there are $\ell$ characters before the $k$-mer suffix $x$ of $u$;
so $|u|=\ell +k$ and therefore $G[id].len - \ell -k$ characters precede $u$ in
$\omega$ (see line \ref{start of u}). In the former case,
$u=c_1c_2\dots c_{\ell} \#$ has length $\ell +1$ and thus
$G[id].len - \ell -1$ characters precede $u$ in $\omega$.
To obtain this value on line \ref{start of u}, $k$ is subtracted from
$\ell +1$ on line \ref{subtract k}.

\begin{algorithm}[H]
  \caption{Find the path of a pattern $P$ with $|P| = m$ in the compressed de Bruijn graph.
    \label{alg-find path in the compressed de Bruijn graph}}
\begin{algorithmic}[1]
\Function{find-nodes}{$P$}
    \Let{$[i..j]$}{$backwardSearch(P[m-k+1..m])$} \Comment{$k$-length suffix of $P$}
    \If{$i>j$}
        \State{\Keyw{return} an empty list} \Comment{$k$-length suffix of $P$ does not occur in the input}
    \EndIf
    \Let{$[lb..rb]$}{$[i..j]$}
    \Let{$id$}{$\bot$}
    \Let{$\ell$}{$0$}
    \While{$id=\bot$} \Comment{search for the node that contains the suffix of length $k$}
        \Let{$ones$}{$rank_1(\BVr,i)$}\label{begin find id}
        \If{$ones$ is odd \Keyw{or} $\BVr[i] = 1$}
            \Let{$id$}{$\lfloor(ones+1)/2\rfloor$}
        \ElsIf{$\BVl[i..j]$ contains a $1$}
            \Let{$id$}{$rightMax + rank_1(\BVl,i-1)+1$}\label{end find id}
        \Else \label{else statement}
            \Let{$i$}{$\Psi(i)$}\Comment{$\Psi$ is the inverse of $\LF$}
            \Let{$j$}{$\Psi(j)$}
            \Let{$\ell$}{$\ell+1$}
            \If{$i \leq \seqnumber$} \Comment{stop node} \label{stop node id}
                \Let{$id$}{$rightMax + leftMax+i$}
                \Let{$\ell$}{$\ell+1-k$}    \label{subtract k}
            \EndIf

        \EndIf
    \EndWhile
    \Let{$\ell$}{$G[id].len - \ell - k$}\label{start of u}
    \Let{$resList$}{$[id]$}\label{resList}\Comment{a list containing only $id$}
    \Let{$[i..j]$}{$[lb..rb]$} \Comment{continue backwardSearch}
    \Let{$pos$}{$m-k$}
    \While{$i \leq j$ \Keyw{and} $pos > 0$}\label{begin while}
        \Let{$[i..j]$}{$backwardSearch(P[pos], [i..j])$}
        \Let{$pos$}{$pos-1$}
        \If{$\ell > 0$}
            \Let{$\ell$}{$\ell-1$}
        \Else
            \Let{$ones$}{$rank_1(\BVr,i)$}
            \If{$ones$ is even \Keyw{and} $\BVr[i] = 0$}
                \Let{$id$}{$rightMax + rank_1(\BVl,i-1)+1$}
            \Else
                \Let{$id$}{$\lfloor(ones+1)/2\rfloor$}
            \EndIf
            \State{add $id$ to the front of $resList$}
            \Let{$\ell$}{$G[id].len-k$}
        \EndIf
    \EndWhile\label{end while}
    \If{$i>j$}
        \State{\Keyw{return} an empty list} \Comment{$P$ does not occur in the input}
    \Else
        \State{\Keyw{return} $resList$}
    \EndIf
\EndFunction
\end{algorithmic}
\end{algorithm}

To summarize, after $\ell$ is set to its new value on line \ref{start of u}
of Algorithm \ref{alg-find path in the compressed de Bruijn graph}, we know
that $id$ is the identifier of the node whose corresponding string $\omega$
contains $P[m-k+1..m]$ and that there are $\ell$ characters preceding
$P[m-k+1..m]$ in $\omega$. On line \ref{resList} the list $resList$,
which will eventually contain the nodes corresponding to pattern $P$,
is initialized with the element $id$. In the while-loop on lines
\ref{begin while}--\ref{end while}, the backward search continues
with the character $P[pos]$ (where $pos=m-k$) and
the $P[m-k+1..m]$-interval $[i..j]$.
As long as $i\leq j$ (i.e., the suffix $P[pos+1..m]$ occurs in the pan-genome)
and $pos > 0$, $backwardSearch(P[pos], [i..j])$ yields the suffix array
interval of $P[pos..m]$ and $pos$ is decremented by one.
Within the while-loop there is a case distinction:
\begin{enumerate}
\item If $\ell > 0$, then the current prefix of $P[pos..m]$ still
belongs to the current node. In this case $\ell$ is decremented by one.
\item
If $\ell = 0$, then the $k$-mer prefix of $P[pos..m]$ belongs to the
predecessor node of the current node. Its identifier $id$ is determined in
the usual way and then added to the front of $resList$. The variable
$\ell$ is set to the new value $G[id].len-k$ because so many characters
precede the $k$-mer prefix of $P[pos..m]$ in the string corresponding
to node $G[id]$.
\end{enumerate}

Algorithm \ref{alg-find path in the compressed de Bruijn graph}
has a worst-case time complexity of $O((m+\ell) \log \sigma)$, where
$m = |P|$ and $\ell$ is the number of executions of the else-statement on line
\ref{else statement}. This is because the overall number of backward search
steps (each of which takes $O(\log \sigma)$ time) is $m$ and the number
of computations of $\Psi$-values (each of which also takes $O(\log \sigma)$
time) is $2\ell$. Of course, $\ell$ is bounded by the length of the
longest string corresponding to a node, but this can be proportional to $n$.
As a matter of fact, the worst case occurs when the algorithm gets a
de Bruijn sequence of order $k$ on the alphabet $\Sigma$ as input:
this is a cyclic string of length $n=\sigma^k$ containing every length $k$
string over $\Sigma$ exactly once as a substring. For example, the string
$aacagatccgctggtt$ is a de Bruijn sequence of order $k=2$
on the alphabet $\Sigma = \{a,c,g,t\}$. The compressed de Bruijn graph
for such a sequence has just one node and the corresponding string
is the de Bruijn sequence itself. In practice, however, $\ell$ is
rather small; see end of Section \ref{sec-experimental results}.

Algorithm \ref{alg-find path in the compressed de Bruijn graph}
finds the nodes in the compressed de Bruijn graph that
correspond to a pattern $P$. In this context, the following
(and similar) questions arise:
\begin{itemize}
\item In which sequences (or genomes) does pattern $P$ (or node $v$) occur?
\item In how many sequences (or genomes) does pattern $P$ (or node $v$) occur?
\item How often does pattern $P$ (or node $v$) occur in a specific sequence
(or genome)?
\end{itemize}
To answer these questions efficiently, we employ the document array
$D$ of size $n=|S|$. An entry $D[i] =j$ means that the suffix $S_{\SA[i]}$
belongs to (or starts within) the sequence $S^j$, where
$j\in \{1,\dots,\seqnumber\}$. The document array can be constructed
in linear time from the suffix array or the $\BWT$; see e.g.\
\cite[p.\ 347]{OHL:2013}. If we store the document array in a wavelet tree,
then the above-mentioned questions can be answered as follows:
Given the suffix array interval $[lb..rb]$ of pattern $P$ (or node $v$),
the procedure call $getIntervals([lb..rb])$ on the wavelet tree of the
document array returns a list consisting of all sequence numbers
$j$ in which $P$ occurs plus the number of occurrences of $P$ in $S^j$.
The worst-case time complexity of the procedure $getIntervals$ is
$O(z + z \log (d/z))$, where $z$ is the number of elements in
the output list; see Section \ref{sec:Introduction}.

\section{Experimental results}
\label{sec-experimental results}

The experiments were conducted on a 64 bit Ubuntu 14.04.1 LTS (Kernel 3.13)
system  equipped with two ten-core Intel Xeon processors E5-2680v2 with 2.8 GHz
and 128GB of RAM (but no parallelism was used).
All programs were compiled with g++ (version 4.8.2) using the provided makefile.
As test files we used the \emph{E.coli} genomes listed in the supplementary
material of \cite{MAR:LEE:SCH:2014}.
Additionally, we used 5 different assemblies of the human reference genome
(UCSC Genome Browser assembly IDs: hg16, hg17, hg18, hg19, and hg38)
as well as the maternal and paternal haplotype of individual NA12878
(Utah female) of the 1000 Genomes Project; see \cite{ROZ:ABY:WAN:2011}.
Our software and test data are available at
\url{https://www.uni-ulm.de/in/theo/research/seqana.html};
splitMEM can be obtained from
\url{http://sourceforge.net/projects/splitmem/}.

We implemented the three algorithms $\algoA$--$\algoC$ described in the
preliminary version of this article \cite{BEL:OHL:2015} and our new
algorithm $\algoD$ using Simon Gog's library \textsf{sdsl}
\cite{GOG:BEL:MOF:PET:2014}.
Both $\algoA$ and $\algoB$ require at least $n \log n$ bits because the
suffix array must be kept in main memory. Hence Yuta Mori's fast algorithm
\textsf{divsufsort} can be used to construct the suffix array without
increasing the memory requirements. By contrast, $\algoC$ and $\algoD$
use a variant of the semi-external
algorithm described in \cite{BEL:ZWE:GOG:OHL:2013} to construct the $\BWT$.
Both $\algoC$ and $\algoD$ store the $\BWT$ in a wavelet tree and use
additional bit vectors; see
Section \ref{sec-Computation of right-maximal k-mers}.
The variants of the algorithms that appear in Table \ref{tab-results-runtime}
are as follows: $\algoCone$ and $\algoDone$ compress only
the additional bit vectors, $\algoCtwo$ and $\algoDtwo$ also
compress the (bit vectors in the) wavelet tree, whereas
$\algoC$ and $\algoD$ do not use these compression options at all.
In contrast to the other algorithms, $\algoD$ (and its variants)
constructs the implicit graph (instead of the explicit graph) and the
wavelet tree of the document array. For a comparison with the
other algorithms, we also measured (called $\algoDplus$) the construction
of the implicit and the explicit graph (i.e., the combination
of Algorithms \ref{alg-computation of a compressed de Bruijn graph}
and \ref{alg-construct the explicit compressed de Bruijn graph}).

The first part of Table \ref{tab-results-runtime} (in which the $k$ column
has the entries init) shows how much time (in seconds) an algorithm needs
to construct the index data structure and its maximum main memory usage in
bytes per base pair. In the experiments,
we built compressed de Bruijn graphs for the $62$ \emph{E.\ coli} genomes
(containing 310 million base pairs) using the $k$-mer lengths $50$,
$100$, and $500$. Table \ref{tab-results-runtime} shows the results of these
experiments. The run-times include the construction of the index,
but similar to splitMEM it is unnecessary to rebuild the index for
a fixed data set and varying values of $k$. The peak memory usage
reported in Table \ref{tab-results-runtime} includes the size of the index
\emph{and} the size of the compressed de Bruijn graph. Due to its large
memory requirements, splitMEM was not able to build a compressed
de Bruijn graph for all $62$ strains of \emph{E.\ coli}
on our machine equipped with 128\;GB of RAM. That is why we included
a comparison based on the first $40$ \emph{E.\ coli} genomes
(containing 199 million base pairs) of the data set.

\begin{table}[H]
\begin{center}
\setlength{\tabcolsep}{2mm}
\begin{footnotesize}
\begin{tabular}{rlrrrr}
\hline
       $k$ &            algorithm &        40 Ecoli &        62 Ecoli &          7xChr1 &            7xHG \\
\hline \\[-0.5em]
      init &            \splitmem &    117     (315.25) &    141     (317.00) &                   - &                   - \\
      init &       \algoA, \algoB &     38 \hd\hd(5.00) &     64 \hd\hd(5.00) &    380     (5.00) &                   - \\
      init &       \algoC, \algoD &    131 \hd\hd(1.32) &    202 \hd\hd(1.24) &  1,168     (1.24) & 20,341     (1.24) \\
[0.5em] \hline \\[-0.5em]
        50 &            \splitmem &  2,261     (572.19) &                   - &                   - &                   - \\
        50 &               \algoA &     57 \hd\hd(5.22) &     92 \hd\hd(5.34) &    596     (6.20) &                   - \\
        50 &               \algoB &     61 \hd\hd(8.49) &     97 \hd\hd(8.78) &    619     (9.98) &                   - \\
        50 &               \algoC &    188 \hd\hd(2.23) &    300 \hd\hd(2.26) &  1,733     (3.07) & 29,816     (2.77) \\
        50 &            \algoCone &    208 \hd\hd(1.81) &    346 \hd\hd(1.85) &  1,880     (2.66) & 31,472     (2.36) \\
        50 &            \algoCtwo &    236 \hd\hd(1.63) &    374 \hd\hd(1.66) &  2,318     (2.51) & 39,366     (2.22) \\
        50 &               \algoD &    164 \hd\hd(1.75) &    254 \hd\hd(1.82) &  1,419     (1.28) & 25,574     (1.96) \\
        50 &            \algoDone &    167 \hd\hd(1.46) &    257 \hd\hd(1.53) &  1,435     (1.28) & 25,866     (1.66) \\
        50 &            \algoDtwo &    179 \hd\hd(1.32) &    272 \hd\hd(1.24) &  1,526     (1.24) & 27,365     (1.39) \\
        50 &           \algoDplus &    172 \hd\hd(3.26) &    268 \hd\hd(3.35) &  1,515     (3.59) & 27,619     (3.88) \\
        50 &        \algoDoneplus &    176 \hd\hd(2.97) &    271 \hd\hd(3.06) &  1,541     (3.31) & 28,044     (3.64) \\
        50 &        \algoDtwoplus &    188 \hd\hd(2.66) &    289 \hd\hd(2.74) &  1,629     (2.96) & 29,517     (3.38) \\
[0.5em] \hline \\[-0.5em]
       100 &            \splitmem &  2,568     (572.20) &                   - &                   - &                   - \\
       100 &               \algoA &     59 \hd\hd(5.00) &     95 \hd\hd(5.00) &    595     (5.95) &                   - \\
       100 &               \algoB &     62 \hd\hd(7.89) &     99 \hd\hd(8.19) &    605     (9.74) &                   - \\
       100 &               \algoC &    188 \hd\hd(1.63) &    299 \hd\hd(1.68) &  1,738     (2.74) & 27,815     (2.23) \\
       100 &            \algoCone &    205 \hd\hd(1.50) &    326 \hd\hd(1.49) &  1,839     (2.33) & 30,401     (1.80) \\
       100 &            \algoCtwo &    232 \hd\hd(1.32) &    411 \hd\hd(1.29) &  2,340     (2.14) & 38,134     (1.66) \\
       100 &               \algoD &    174 \hd\hd(1.71) &    261 \hd\hd(1.79) &  1,422     (1.28) & 25,723     (1.94) \\
       100 &            \algoDone &    171 \hd\hd(1.42) &    264 \hd\hd(1.50) &  1,439     (1.28) & 26,040     (1.64) \\
       100 &            \algoDtwo &    185 \hd\hd(1.32) &    289 \hd\hd(1.24) &  1,544     (1.24) & 27,464     (1.37) \\
       100 &           \algoDplus &    178 \hd\hd(2.61) &    270 \hd\hd(2.73) &  1,486     (3.21) & 26,878     (3.36) \\
       100 &        \algoDoneplus &    175 \hd\hd(2.32) &    273 \hd\hd(2.44) &  1,500     (2.92) & 26,999     (3.07) \\
       100 &        \algoDtwoplus &    190 \hd\hd(2.01) &    299 \hd\hd(2.12) &  1,624     (2.68) & 28,665     (2.80) \\
[0.5em] \hline \\[-0.5em]
       500 &            \splitmem &  2,116     (570.84) &                   - &                   - &                   - \\
       500 &               \algoA &     72 \hd\hd(5.00) &    113 \hd\hd(5.00) &    620     (5.83) &                   - \\
       500 &               \algoB &     83 \hd\hd(7.17) &    117 \hd\hd(7.43) &    640     (9.66) &                   - \\
       500 &               \algoC &    194 \hd\hd(1.50) &    304 \hd\hd(1.49) &  1,752     (2.67) & 28,548     (2.07) \\
       500 &            \algoCone &    216 \hd\hd(1.50) &    325 \hd\hd(1.49) &  1,839     (2.19) & 30,488     (1.65) \\
       500 &            \algoCtwo &    241 \hd\hd(1.32) &    378 \hd\hd(1.29) &  2,319     (2.06) & 36,993     (1.50) \\
       500 &               \algoD &    184 \hd\hd(1.65) &    283 \hd\hd(1.74) &  1,453     (1.28) & 26,362     (1.93) \\
       500 &            \algoDone &    197 \hd\hd(1.35) &    287 \hd\hd(1.44) &  1,477     (1.28) & 26,545     (1.63) \\
       500 &            \algoDtwo &    213 \hd\hd(1.32) &    322 \hd\hd(1.24) &  1,622     (1.24) & 28,501     (1.36) \\
       500 &           \algoDplus &    185 \hd\hd(1.81) &    285 \hd\hd(1.90) &  1,509     (3.14) & 27,285     (3.14) \\
       500 &        \algoDoneplus &    198 \hd\hd(1.52) &    288 \hd\hd(1.61) &  1,535     (2.83) & 27,417     (2.79) \\
       500 &        \algoDtwoplus &    214 \hd\hd(1.32) &    323 \hd\hd(1.29) &  1,694     (2.56) & 29,283     (2.58) \\
\end{tabular}
\caption{The first column shows the $k$-mer size (an entry init means that only
the index data structure is constructed) and the second column specifies
the algorithm used in the experiment. The remaining columns show the run-times
in seconds and, in parentheses, the maximum main memory usage in bytes per
base pair (including the construction) for the data sets described in the text.
A minus indicates that the respective algorithm was not able to solve
its task on our machine equipped with $128$ GB of RAM.}
\label{tab-results-runtime}
\end{footnotesize}
\end{center}
\end{table}

The experimental results show that our algorithms are more than an order of
magnitude faster than splitMEM while using significantly less space
(two orders of magnitude). To show the scalability of the new algorithms,
we applied them to different assemblies of the human genome
(consisting of 23 chromosomes: the 22 autosomes and the X-chromosome).
The compressed de Bruijn graphs of their first chromosomes (7xChr1, containing
1,736 million base pairs) and the complete seven genomes (7xHG, containing
21,201 million base pairs) were built for the $k$-mer lengths $50$,
$100$, and $500$. One can see from Table \ref{tab-results-runtime}
that algorithms $\algoA$ and $\algoB$ are very fast, but 128\;GB of RAM
was not enough for them to successfully build the graph for the seven human
genomes (note that at least $5$ bytes per base pair are required).
So let us compare algorithms $\algoC$ and $\algoD$ (and their variants).
The construction of the explicit graph with $\algoDplus$ is faster than
with $\algoC$, but $\algoDplus$ seems to use much more space for this task.
The space comparison, however, is not fair because $\algoD$ also constructs
the wavelet tree of the document array and two select data structures
for the wavelet tree of the $\BWT$ to calculate $\Psi$ values.
These data structures are important for searches on the graph,
but they are superfluous in the construction of the explicit graph.
So in fact $\algoDplus$ uses only a little more space for this task because
the implicit representation of the graph, which must be kept in main memory,
is rather small. Table \ref{tab-results-space-details-50} contains
a detailed breakdown of the space usage of the variants of algorithm $\algoD$.
As the explicit compressed de Bruijn graph, the combination of the implicit
graph and the FM-index supports a graph traversal (albeit in backward
direction). For this task the implicit graph and the FM-index use
much less space than the explicit graph. In contrast to the explicit graph,
our new data structure allows to search for a pattern $P$ in the graph
and to answer questions like: In how many sequences does $P$ occur?
It is this new functionality (notably the document array) that increases
the memory usage again; cf.\ Table \ref{tab-results-space-details-50}.
Despite this new functionality, the overall space consumption of $\algoD$ is
in most cases less than that of $\algoC$; see Table \ref{tab-results-runtime}.

In our next experiment, we measured how long it takes to find the nodes
in the graph that correspond to a pattern $P$. Since the median protein
length in \emph{E.\ coli} is $278$ and a single amino acid is coded by three
nucleotides, we decided to use a pattern length of $900$.
Table \ref{tab-results-search-match} shows the results for $10,000$ patterns
that occur in the pan-genome (if patterns do not occur in the pan-genome,
the search will be even faster; data not shown). Furthermore, we measured
how long it takes to determine to which sequences each node belongs
(using the procedure $getIntervals$ on the wavelet tree of the
document array as described at the end of
Section \ref{sec:Operations on the compressed de Bruijn graph}).
Table \ref{tab-results-docs-all} shows the results for the nodes
corresponding to $10,000$ patterns that occur in the pan-genome.

Finally, we determined the length of the longest string corresponding
to a node in the compressed de Bruijn graph. This is important
because the worst-case search time depends on this length; see end of Section
\ref{sec-Construction of the explicit compressed de Bruijn graph}.
The results can be found in Table \ref{tab-results-lengths}.

\begin{table}[H]
\begin{center}
\setlength{\tabcolsep}{2mm}
\begin{footnotesize}
\begin{tabular}{llrrr}
\hline
                algo &         part &         62 Ecoli &           7xChr1 &             7xHG \\
\hline \\[-0.5em]
              \algoD &       wt-bwt & 0.42   (23.83\%) & 0.44   (36.23\%) & 0.43   (22.68\%) \\
              \algoD &        nodes & 0.10 \hd(5.94\%) & 0.03 \hd(2.61\%) & 0.04 \hd(2.02\%) \\
              \algoD &       $\BVr$ & 0.16 \hd(8.93\%) & 0.16   (12.86\%) & 0.16 \hd(8.25\%) \\
              \algoD &       $\BVl$ & 0.14 \hd(8.04\%) & 0.14   (11.57\%) & 0.14 \hd(7.42\%) \\
              \algoD &       wt-doc & 0.93   (53.26\%) & 0.45   (36.73\%) & 1.13   (59.63\%) \\
[0.5em] \hline \\[-0.5em]
           \algoDone &       wt-bwt & 0.42   (28.57\%) & 0.44   (47.83\%) & 0.43   (26.85\%) \\
           \algoDone &        nodes & 0.10 \hd(7.12\%) & 0.03 \hd(3.44\%) & 0.04 \hd(2.39\%) \\
           \algoDone &       $\BVr$ & 0.00 \hd(0.23\%) & 0.00 \hd(0.12\%) & 0.00 \hd(0.09\%) \\
           \algoDone &       $\BVl$ & 0.00 \hd(0.23\%) & 0.00 \hd(0.12\%) & 0.00 \hd(0.08\%) \\
           \algoDone &       wt-doc & 0.93   (63.85\%) & 0.45   (48.49\%) & 1.13   (70.59\%) \\
[0.5em] \hline \\[-0.5em]
           \algoDtwo &       wt-bwt & 0.16   (13.03\%) & 0.22   (31.01\%) & 0.22   (15.62\%) \\
           \algoDtwo &        nodes & 0.10 \hd(8.67\%) & 0.03 \hd(4.55\%) & 0.04 \hd(2.76\%) \\
           \algoDtwo &       $\BVr$ & 0.00 \hd(0.28\%) & 0.00 \hd(0.16\%) & 0.00 \hd(0.10\%) \\
           \algoDtwo &       $\BVl$ & 0.00 \hd(0.28\%) & 0.00 \hd(0.16\%) & 0.00 \hd(0.10\%) \\
           \algoDtwo &       wt-doc & 0.93   (77.74\%) & 0.45   (64.11\%) & 1.13   (81.42\%) \\
\end{tabular}
\end{footnotesize}
\end{center}
\caption{The first column shows the algorithm used in the experiment
(the $k$-mer size is $50$). The second column specifies the different
data structures used: wt-bwt stands for the wavelet tree
of the $\BWT$ (including rank and select support), nodes stands for
the array of nodes (the implicit graph representation), $\BVr$ and
$\BVl$ are the bit vectors described in
Section \ref{sec-Computation of right-maximal k-mers}
(including rank support), and wt-doc stands for the wavelet tree
of the document array.
The remaining columns show the memory usage in bytes per
base pair and, in parentheses, their percentage.}
\label{tab-results-space-details-50}
\end{table}

\begin{table}[H]
\begin{center}
\setlength{\tabcolsep}{2mm}
\begin{footnotesize}
\begin{tabular}{rlrrr}
\hline
       $k$ &                      &        62 Ecoli &          7xChr1 &            7xHG \\
\hline \\[-0.5em]
        50 &               \algoD &      3     (1.81) &      9     (1.28) &      9     (1.96) \\
        50 &            \algoDone &      3     (1.52) &      9     (0.98) &     11     (1.66) \\
        50 &            \algoDtwo &      6     (1.20) &     20     (0.70) &     29     (1.39) \\
[0.5em] \hline \\[-0.5em]
       100 &               \algoD &      3     (1.78) &     12     (1.26) &     27     (1.94) \\
       100 &            \algoDone &      3     (1.49) &     15     (0.97) &     19     (1.64) \\
       100 &            \algoDtwo &      6     (1.17) &     31     (0.68) &     51     (1.37) \\
[0.5em] \hline \\[-0.5em]
       500 &               \algoD &      9     (1.73) &     20     (1.26) &     22     (1.93) \\
       500 &            \algoDone &     12     (1.43) &     24     (0.96) &     27     (1.63) \\
       500 &            \algoDtwo &     17     (1.11) &     55     (0.67) &     74     (1.36) \\
\end{tabular}
\caption{The first column shows the $k$-mer size
and the second column specifies the algorithm used in the experiment.
The remaining columns show the run-times in seconds for finding the nodes
corresponding to $10,000$ patterns of length $900$ (that occur in the
pan-genome) and, in parentheses, the maximum main memory usage in bytes per
base pair for the data sets described in the text.}
\label{tab-results-search-match}
\end{footnotesize}
\end{center}
\end{table}

\begin{table}[H]
\begin{center}
\setlength{\tabcolsep}{2mm}
\begin{footnotesize}
\begin{tabular}{rlrrr}
\hline
       $k$ &                      &        62 Ecoli &          7xChr1 &            7xHG \\
\hline \\[-0.5em]
        50 &               \algoD &  10.84     (1.81) &   3.31     (1.28) &  15.33     (1.96) \\
        50 &            \algoDone &  10.91     (1.52) &   3.17     (0.98) &  14.88     (1.66) \\
        50 &            \algoDtwo &  11.02     (1.20) &   3.07     (0.70) &  13.02     (1.39) \\
[0.5em] \hline \\[-0.5em]
       100 &               \algoD &   8.31     (1.78) &   2.72     (1.26) &  10.99     (1.94) \\
       100 &            \algoDone &   8.11     (1.49) &   2.83     (0.97) &   9.10     (1.64) \\
       100 &            \algoDtwo &   8.23     (1.17) &   2.84     (0.68) &   9.25     (1.37) \\
[0.5em] \hline \\[-0.5em]
       500 &               \algoD &   2.43     (1.73) &   1.32     (1.26) &   4.51     (1.93) \\
       500 &            \algoDone &   2.78     (1.43) &   1.32     (0.96) &   4.22     (1.63) \\
       500 &            \algoDtwo &   2.32     (1.11) &   1.29     (0.67) &   4.30     (1.36) \\
\end{tabular}
\caption{The first column shows the $k$-mer size
and the second column specifies the algorithm used in the experiment.
The remaining columns show the run-times in seconds for
finding out to which sequences each of the nodes belongs (where
the nodes correspond to $10,000$ patterns of length $900$ that occur in the
pan-genome) and, in parentheses, the maximum main memory usage in bytes per
base pair for the data sets described in the text.}
\label{tab-results-docs-all}
\end{footnotesize}
\end{center}
\end{table}

\begin{table}[H]
\begin{center}
\setlength{\tabcolsep}{2mm}
\begin{footnotesize}
\begin{tabular}{rrrr}
\hline
       $k$ &     62 Ecoli &       7xChr1 &         7xHG \\
\hline \\[-0.5em]
        50 &       79,967 &       41,571 &       36,579 \\
       100 &      173,366 &       85,773 &      203,398 \\
       500 &      179,671 &    2,283,980 &    1,402,896 \\
\end{tabular}
\caption{The first column specifies the $k$-mer size
and the remaining columns show the length of the longest
string corresponding to a node in the compressed de Bruijn graph.}
\label{tab-results-lengths}
\end{footnotesize}
\end{center}
\end{table}

\section{Conclusions}
\label{sec-conclusions}
We have presented a space-efficient method to build the compressed de Bruijn
graph from scratch. An experimental comparison with splitMEM showed
that our algorithm is more than an order of magnitude faster than splitMEM
while using significantly less space (two orders of magnitude). To
demonstrate its scalability, we successfully applied it to seven complete
human genomes. Consequently, it is now
possible to use the compressed de Bruijn graph for much larger pan-genomes
than before (consisting e.g.\ of hundreds or even thousands of different
strains of bacteria). Moreover, the combination of the implicit graph
and the FM-index can be used to search for a pattern $P$ in the graph
(and to traverse the graph).

Future work includes a parallel implementation of the construction algorithm.
Moreover, it should be worthwhile to investigate the time-space trade-off
if one uses data structures that are optimized for highly repetitive texts;
see \cite{NAV:ORD:2014} and the references therein.

\section*{Acknowledgments}
This work was supported by the DFG (OH 53/6-1).

\end{document}